\documentclass[lettersize,journal]{IEEEtran}
\usepackage{amsmath,amsfonts,amsthm}
\allowdisplaybreaks[4]
\usepackage{amssymb}
\usepackage{algpseudocode}
\usepackage{algorithm}
\usepackage{array}
\usepackage{textcomp}
\usepackage{stfloats}
\usepackage{url}
\usepackage{verbatim}
\usepackage{graphicx}
\usepackage{subcaption}
\usepackage{cite}
\usepackage{hyperref}
\usepackage{xcolor}
\usepackage{booktabs}
\usepackage{multirow}
\newtheoremstyle{mystyle}
{3pt}
{3pt}
{}
{1em}
{\bfseries\itshape}
{:}
{0.5em}
{}

\hypersetup{
    colorlinks=true,
    citecolor=blue,
    linkcolor=magenta,
}

\begin{document}

\theoremstyle{mystyle}   % 设置定理样式为mystyle

% 1. proof Env prefix (default "Proof")
\providecommand{\proofname}{Proof}
% 2. redefine proof env
\makeatletter
\renewenvironment{proof}[1][\proofname]{\par
\pushQED{\qed}%
\normalfont \topsep3\p@\@plus3\p@\relax
\trivlist
\item\relax
% default \itshape
\quad\quad
{\itshape
#1\@addpunct{:}}\hspace\labelsep\ignorespaces
}{%
\popQED\endtrivlist\@endpefalse
}
\makeatother

\newtheorem{assumption}{Assumption}
\newtheorem{theorem}{Theorem}
\newtheorem{lemma}{Lemma}
\newtheorem{proposition}{Proposition}
\newtheorem{remark}{Remark}

%\setlength{\abovedisplayskip}{1.2pt}
%\setlength{\abovedisplayshortskip}{1.2pt}
%\setlength{\belowdisplayskip}{1.2pt}
%\setlength{\belowdisplayshortskip}{1.2pt}
%
%% Compact float and caption spacing
%\setlength{\textfloatsep}{5pt plus 1pt minus 2pt}
%\setlength{\floatsep}{4pt plus 1pt minus 1pt}
%\setlength{\intextsep}{5pt plus 1pt minus 2pt}
%\setlength{\dbltextfloatsep}{5pt plus 1pt minus 2pt}
%\setlength{\dblfloatsep}{4pt plus 1pt minus 1pt}
%
%
%% Compact multi-line equations
%\setlength{\jot}{1pt}
%\setlength{\arraycolsep}{2pt}
%\setlength{\tabcolsep}{3pt}

\title{Sensing-Native Over-the-Air Federated Learning}

\author{Peiyuan Huang,~Shijian Gao,~\IEEEmembership{Member, IEEE},~Jia Yan,~\IEEEmembership{Member, IEEE},~Georgios B. Giannakis,~\IEEEmembership{Life Fellow, IEEE}
        % <-this % stops a space
\thanks{This work was presented in part at the IEEE International Geoscience and Remote Sensing Symposium (IGARSS), Pasadena, CA, USA, 2023~\cite{gao2023integrated}.}%
\thanks{Peiyuan Huang and Jia Yan are with the Intelligent Transportation Thrust, The Hong Kong University of Science and Technology (Guangzhou), Guangzhou 511453, China (e-mails: phuang169@connect.hkust-gz.edu.cn; jasonjiayan@hkust-gz.edu.cn).}%
\thanks{Shijian Gao is with the Internet of Things Thrust, The Hong Kong University of Science and Technology (Guangzhou), Guangzhou 511400, China (e-mail: shijiangao@hkust-gz.edu.cn).}%
\thanks{Georgios B. Giannakis is with the Department of Electrical and Computer Engineering, University of Minnesota, Minneapolis MN 55455 USA (e-mail: georgios@umn.edu).}
}

% The paper headers
%\markboth{Journal of \LaTeX\ Class Files,~Vol.~14, No.~8, August~2021}
%{Shell \MakeLowercase{\textit{et al.}}: A Sample Article Using IEEEtran.cls for IEEE Journals}

%\IEEEpubid{0000--0000/00\$00.00~\copyright~2021 IEEE}
% Remember, if you use this you must call \IEEEpubidadjcol in the second
% column for its text to clear the IEEEpubid mark.

\maketitle

\begin{abstract}
Over-the-air federated learning (FL) leverages the superposition property of multiple-access channels to enable communication-efficient distributed model training. Existing integrated sensing, communication, and computation (ISCC)-enabled over-the-air FL systems typically require dedicated resources for the sensing module, inevitably compromising FL performance due to resource competition. \textcolor{black}{In this paper, we propose a sensing-native over-the-air FL framework that explores built-in distributed wireless sensing capability with zero overhead per model aggregation.
%enables target localization without additional sensing overhead by exploiting the distributed wireless sensing functionality inherently embedded in each model aggregation round. 
%Specifically, we reveal that the signal of each wireless device (WD) used to convey FL local updates possesses inherent auto-correlation characteristics that make it well-suited for sensing.
Specifically, the high-dimensional local gradient signals possessing favorable autocorrelation property are concurrently leveraged for target distance estimation, while the gradient statistics already required for over-the-air FL serve as a ready-made gateway to deliver locally-sensed results to the edge server for cooperative localization. To combat inter-device interference, channel fading, and communication noise, we put forth a robust trilateration-based target positioning method building upon an efficient matched-filtering-based distance estimation.
%each wireless device estimates its target distance via matched filtering of the echo gradient signal, and the BS performs robust trilateration with density-based clustering. 
Then, by explicitly characterizing the impact of imperfect model aggregation and noisy gradient-statistics transmission on the sensing-native over-the-air FL convergence, we develop a statistics-aware communication-learning co-design approach.} We first derive the closed-form optimal power budgets
allocated to local gradients and their statistics, based on which an
efficient successive convex approximation method is proposed for
receiver beamforming optimization. Simulation results show that the proposed framework simultaneously achieves superior learning and sensing performance  compared to representative baselines.
\end{abstract}

\begin{IEEEkeywords}
Federated learning, integrated sensing-communication-computation, over-the-air computation, gradient statistics, wireless sensing, resource allocation.
\end{IEEEkeywords}

\section{Introduction}
Federated learning (FL) is a promising distributed edge learning paradigm that allows multiple wireless devices (WDs) to collaboratively learn a shared model under the coordination of a base station (BS) and without exchanging data directly \cite{mcmahan2017communication}. The need to transmit high-dimensional local model parameters for global aggregation substantially undermines the communication efficiency of FL over wireless networks. To tackle this issue, over-the-air computation \cite{nazer2007computation} allows simultaneous transmissions of all device-end updates through a shared spectrum by leveraging the signal-superposition property of multiple-access channels. 

With integrated communication and computation, over-the-air FL \cite{zhu2019broadband} has great potential for substantial resource savings while maintaining comparable performance to conventional FL with orthogonal transmission. The communication-learning co-design for over-the-air FL has been extensively investigated in \cite{yang2020federated,sery2021over,xu2021learning,9606731,9681882,r1} to mitigate the impact of wireless channel fading and communication noise on the FL performance by simply assuming error-free transmission of learning statistics, i.e., means and variances of local gradients in each training round. For example, \cite{r1} exploits reconfigurable intelligent surface to enhance the channels of stragglers and thereby improve FL performance, while its transceiver design relies on the precise gradient statistics reported by the WDs.

Beyond the synergy of communications and computing, the geographically diverse deployment of WDs and the high-dimensional local update vectors with desirable autocorrelation properties further endow over-the-air FL with distributed wireless sensing capability, an aspect that has been largely overlooked in the existing literature. \textcolor{black}{In return, such inherent sensing capability equips over-the-air FL with environmental awareness to perceive blockage, mobility, and surrounding dynamics, thereby facilitating environment-adaptive client scheduling and resource allocation for improved learning performance.} In this paper, we propose a novel sensing-native over-the-air FL framework by harnessing the inherent distributed sensing functionality along with collaborative model training. \textcolor{black}{The native sensing in our framework exploits not only the local gradient signals with favorable autocorrelation property but also the information delivery infrastructure of over-the-air FL, making it a zero-overhead built-in capability of the model aggregation process.
%where sensing is treated as a built-in capability of the over-the-air model aggregation process rather than an additional function that requires dedicated resource consumption.
} 
Specifically, during per-round over-the-air model aggregation, each WD first analyzes the echo signals reflected from the passive sensing target to obtain the corresponding distance. Accordingly, the learning statistics act as an off-the-shelf gateway to deliver locally-sensed results to the BS for cooperative localization without introducing additional overhead. The proposed framework enables the coherent fusion of sensing, communications, and computation in a wireless manner.

It is worth mentioning that the proposed sensing-native over-the-air FL framework is fundamentally different from the existing works on ISCC-enabled FL systems \cite{11241079,liu2025uav,tang2025integrated,liu2022toward,liu2024multi,hu2025differentially,10767214}. Prior art requires dedicated resources to be allocated to the wireless sensing module. These resources, however, must compete with those demanded by the communications and computing modules. For instance, in \cite{liu2022toward}, WDs in the human motion recognition scenario leverage extra radio resources for wireless sensing to continuously collect training samples and enrich the local dataset throughout the FL process. The joint optimization of sensing, computation, and communication resources is studied to maximize the FL convergence speed, where the total number of sensed samples and the round-varying batch sizes are efficiently obtained. \cite{liu2024multi} investigated a multi-objective optimization over sensing, communication, and computation resources, by involving Cramer-Rao Bound, sum rate and power consumption as the performance metrics. Considering over-the-air FL, the BS performs sensing and FL in the same time-frequency resource \cite{du2024integrated}, where the transceiver beamforming and device selection are jointly optimized by minimizing the FL convergence gap while satisfying the sensing requirement. \cite{10615482} classifies WDs into three groups:  those selected to participate in over-the-air FL while reusing their uplink signals for sensing, those activated solely to transmit dedicated sensing waveforms, and those remaining idle. Accordingly, a joint client selection and power control problem is studied to strike a balance between learning and sensing performance. Unlike the existing ISCC-empowered over-the-air FL works, where dedicating resources to wireless sensing inevitably compromises FL performance due to resource competition, this paper explores the inherent wireless sensing functionality embedded in each over-the-air FL model aggregation round at no additional cost, unlocking performance gains in both tasks.

The design challenges of such a sensing-native over-the-air FL framework are twofold. First, the echo signal reflected by the sensing target not only contains the highly correlated local gradient signal but is also contaminated by inter-device interference due to over-the-air aggregation, rendering the local distance estimation at each WD inaccurate. Moreover, the uplink transmission of the locally estimated distance information, embedded in the local learning statistics, is further perturbed by wireless fading and communication noise, posing a significant challenge to collaborative target localization at the BS. Therefore, the first question raised is how to design an accurate distributed wireless sensing method by leveraging only the local gradient signals over the air at zero overhead. 

Second, the learning statistics signals naturally serve as a ready-made enabler that simultaneously conveys the locally sensed distance information and the statistical moments  of local gradients. Although requiring negligible additional frequency resources, the imperfect transmission of such dual-functional signals leads to severe sub-optimality in FL learning performance. This gives rise to the statistics-aware communication-learning co-design problem: how to maximize the overall sensing-native over-the-air FL performance in the presence of noisy learning statistics per-round. 

In this paper, we propose a novel sensing-native over-the-air FL framework to free up the entirety of ISCC resources for the improvement of learning performance while simultaneously enabling distributed zero-overhead wireless sensing with high accuracy. The main contributions are summarized as follows:

\begin{itemize}
    \item The proposed framework integrates sensing, communication, and computation in a fully native manner by reusing the uplink gradient signals and the associated learning-statistics transmission already required by over-the-air FL. This design enables cooperative target localization without dedicated sensing waveforms or additional radio resources, thereby eliminating the performance trade-off between sensing and FL.
    \item We develop a zero-overhead cooperative localization method tailored to the proposed framework. Each WD estimates its target distance from the echo of its local gradient signal via matched filtering, while the BS fuses the uploaded distance estimates for target localization. To combat the inter-user interference inherent in over-the-air FL, we further design an efficient trilateration method with density-based clustering for localization robustness.
    
    \item %\textcolor{black}{This work is among the first to explicitly characterize the coupling effects of imperfect model aggregation and noisy transmission of gradient statistics on the convergence of the proposed sensing-native over-the-air FL framework. We reveal that the imperfect  statistics transmission amplifies over-the-air aggregation error, calling for the joint optimization of receiver beamforming and transmit power budgets allocated to local gradients and their statistics.  }
    %a previously overlooked coupling effect between these two impairments. This coupling effect shows that inaccurate variance estimation used for transmit-scaling design can further amplify over-the-air model aggregation error. This result uncovers a new source of convergence degradation and highlights the necessity of jointly accounting for local-gradient and statistics transmission in the learning analysis.
    This work is among the first to explicitly characterize the coupling effects of imperfect model aggregation and noisy gradient-statistics transmission on the convergence of the proposed sensing-native over-the-air FL. We reveal that imperfect statistics transmission amplifies over-the-air aggregation error, calling for the joint optimization of receiver beamforming and transmit power budgets allocated to local gradients and their statistics.
    \item %We develop a statistics-aware communication-learning co-design approach 
    %by jointly optimizing transmit power allocation and receive beamforming strategy to minimize the theoretically-derived convergence gap. The problem is challenging due to the non-convex nature and strong coupling among diverse decision variables. We first derive the optimal transmit power budgets allocated to the local gradient and its statistics in closed forms, wherein the engineering insights can be observed for practical implementation. Accordingly, the optimal beamforming scheme is obtained by an efficient successive convex approximation method.
    We develop a statistics-aware communication-learning co-design approach to enhance the learning convergence. Despite the non-convexity and strong coupling among decision variables, we first derive the optimal transmit power budgets allocated to local gradient and its statistics in closed forms, based on which the optimal beamforming scheme is obtained by an efficient successive convex approximation method.
\end{itemize}

Extensive simulation results demonstrate that the proposed sensing-native over-the-air FL framework achieves superior learning accuracy while simultaneously attaining broader sensing coverage with lower target localization error, compared to the representative baselines.

The remainder of this paper is organized as follows. Section \ref{sec_sysmodel} introduces the proposed novel sensing-native over-the-air FL framework. Section \ref{sec_sensing} details the zero-overhead cooperative localization method tailored to the proposed framework. In Section \ref{sec_analysis}, we present the learning performance analysis for the proposed framework and formulate the statistics-aware communication-learning co-design problem. Section \ref{sec_opt} develops efficient approaches to jointly optimize the transmit power budget allocation and receiver beamforming. Section \ref{sec_exp} evaluates the performance of the proposed framework through extensive simulations. Section \ref{sec_con} concludes this paper.

\section{System Model}\label{sec_sysmodel}

\subsection{System Overview}

As shown in Fig. \ref{fig_model}, we propose the sensing-native over-the-air FL framework comprising $M$ single-antenna WDs, indexed by $m \in \mathcal{M}=\{1,2, \dots, M\}$, and an $N$-antenna BS. The target is to collaboratively train a global model through FL, along with zero-overhead wireless sensing by harnessing the transmit signals of local updates during model aggregations. Specifically, the $D$-dimensional global model $\mathbf{w}$ is optimized by minimizing the empirical global loss function, i.e., 
\begin{align}\label{global_loss}
\min_{\mathbf{w}}F(\mathbf{w})=\frac1K\sum_{k=1}^Kf(\mathbf{w};(\mathbf{u}_{k},v_{k})), 
\end{align}
where $f(\mathbf{w};(\mathbf{u}_{k},v_{k}))$ is the loss function with respect to the $k$-th training sample $(\mathbf{u}_{k},v_{k})$. $\mathbf{u}_{k}$ and $v_{k}$ are the input feature vector and the output label, respectively.
Suppose that the total number of $K$ training samples are distributed among WDs, with the $m$-th WD holding local dataset $\mathcal{D}_m=\{(\mathbf{u}_{m,k},v_{m,k}), 1\leq k\leq K_m\}$. 

To conduct FL, WD $m$ first performs local training on $\mathcal{D}_m$ by minimizing the local empirical loss function $F_m(\mathbf{w})= \frac{1}{K_m}\sum_{k=1}^{K_m}f(\mathbf{w};(\mathbf{u}_{m,k},v_{m,k}))$.
Then, WDs upload their local training results to the BS for global model update. Accordingly, the global loss function in (\ref{global_loss}) can be rewritten as a weighted sum of the local loss functions, i.e., $F(\mathbf{w})=\frac{1}{\sum_{m =1}^{M} K_m}\sum_{m =1}^{M} K_mF_m(\mathbf{w})$.
We perform a distributed gradient descent algorithm to solve problem (\ref{global_loss}) with $T$-round model aggregations, where the local update signals of WDs per iteration are naturally exploited for target sensing. Specifically, the $t$-th training iteration contains the following steps:
\begin{enumerate}
    \item \textbf{Model broadcasting}: The BS broadcasts the current global model $\mathbf{w}_t$ to all WDs.
    
    \item \textbf{Local training}: Each WD computes its local gradient using the local dataset by adopting batch gradient descent, i.e., $\mathbf{g}_{m,t} \!=\! \nabla F_m(\mathbf{w}) \!=\! \frac{1}{K_m}\!\sum_{k=1}^{K_m}\nabla f(\mathbf{w};(\mathbf{u}_{m,k},v_{m,k})).$
%\begin{align}
%    \end{align}
\item \textbf{Model aggregation}: WDs transmit $\mathbf{g}_{m,t}$ to the BS via wireless channels. To update the global model, the BS intends to calculate the weighted sum of gradients $\mathbf{r}_t=\sum_{m =1}^{M} K_m\mathbf{g}_{m,t}$. However, due to the channel fading and communication noise, only the estimate of $\mathbf{r}_t$, denoted as $\hat{\mathbf{r}}_t$, can be obtained. Then, the BS updates the global model by $\mathbf{w}_{t+1}=\mathbf{w}_t-\frac{\lambda}{\sum_{m =1}^{M} K_m} \hat{\mathbf{r}}_t$,
where $\lambda$ denotes the learning rate.
    
\item \textbf{Target sensing}: The model aggregation process inherently enables distributed target sensing by exploiting the favorable autocorrelation property of local gradient signals. Specifically, each WD $m$ first analyzes the echo signals reflected from the passive target and intends to acquire the corresponding distance $R_{m}$. However, due to communication imperfections and the interference from the other WDs' local gradient signals, one can only obtain the estimate of $R_{m}$ per round $t$, defined as $\tilde{R}_{m,t}$, which is subsequently
uploaded to the BS for cooperative localization.
\end{enumerate}

\begin{figure}[t]\centering
\includegraphics[width=0.42\textwidth]{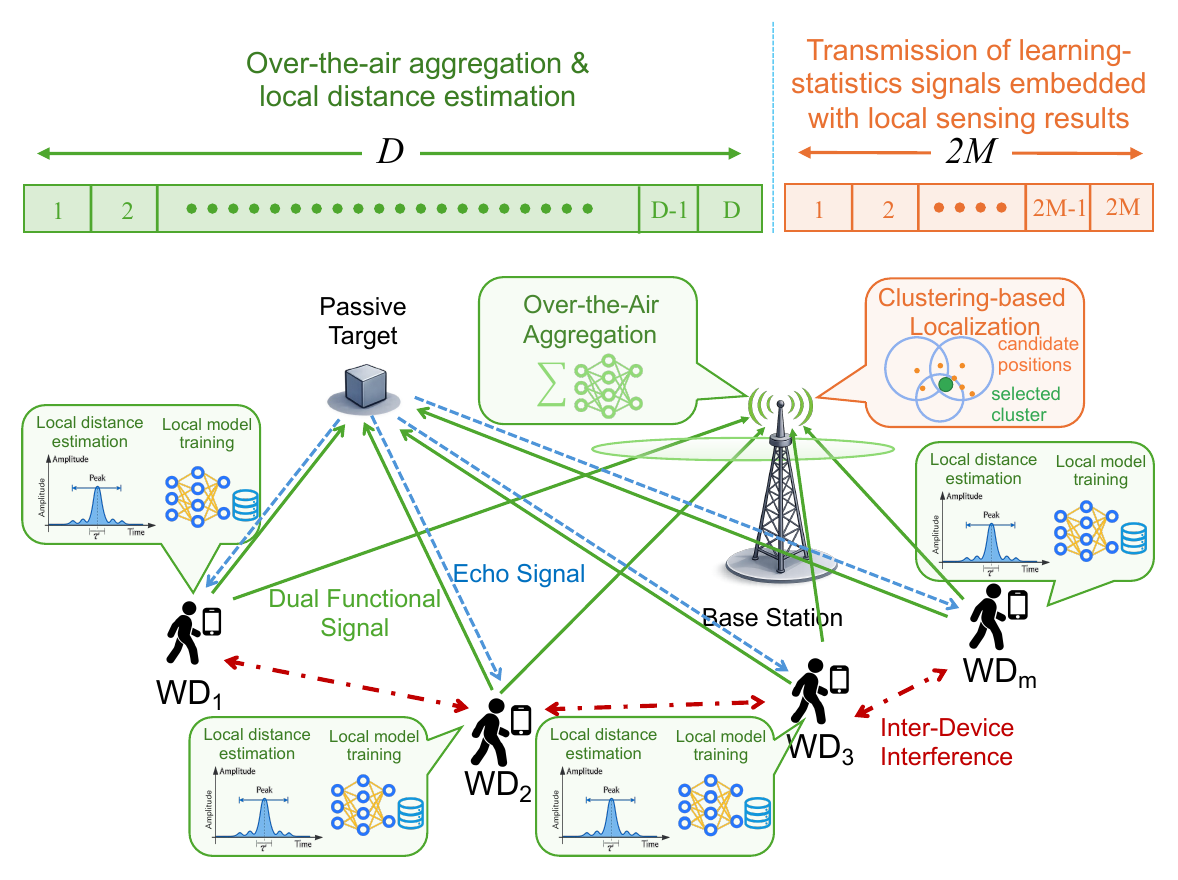}
% where an .eps filename suffix will be assumed under latex, 
% and a .pdf suffix will be assumed for pdflatex; or what has been declared
% via \DeclareGraphicsExtensions.
\caption{The proposed sensing-native over-the-air FL framework.}
\label{fig_model}
\end{figure}

\subsection{Dual-functional Transmit Signal Design}

To simultaneously perform model aggregation and target sensing, each WD $m$ first converts local gradient $\mathbf{g}_{m,t}$ to the transmit signal $\mathbf{x}_{1,m,t}$. Specifically, we first calculate the mean and standard deviation of local gradient vector $\mathbf{g}_{m,t}$ by 
\begin{align}
    \mu_{m,t}\!=\! \frac{1}{D} \!\sum_{d=1}^Dg_{m,t}[d], \ \nu_{m,t}^2 \!=\! \frac{1}{D} \!\sum_{d=1}^D\left(g_{m,t}[d]\!-\!\mu_{m,t}\right)^2,
\end{align}
where $g_{m,t}[d]$ is the $d$-th entry of $\mathbf{g}_{m,t}$. Then, the $d$-th entry of the transmit signal $\mathbf{x}_{1,m,t}$ is given by
\begin{align}\label{tx_grad}
    x_{1,m,t}[d] =p_{1,m,t} \bar{g}_{m,t}[d],
\end{align}
where $\bar{g}_{m,t}[d] = \frac{g_{m,t}[d] - \mu_{m,t}}{\nu_{m,t}}$ is the normalized local gradient, $p_{1,m,t} \in \mathbb{C}$ is the precoding factor that combats channel rotation and controls the transmit power. Here, we consider the power constraint for local gradient transmission as
\begin{align}
    \mathbb{E}[|x_{1,m,t}[d]|^2] &= |p_{1,m,t}|^2 \mathbb{E}\left[\left|\frac{g_{m,t}[d] - \mu_{m,t}}{\nu_{m,t}}\right|^2\right] \leq P_g. \label{pg_c}
\end{align}
To upload the local gradient statistics (i.e., mean $\mu_{m,t}$ and variance $\nu_{m,t}$) from each WD to the BS, we first design the transmit signal for the gradient variance $\nu_{m,t}$ of WD $m$ as
\begin{align}\label{tx_nu}
    x_{2,m,t} &= \frac{\sqrt{p_{2,m,t}}}{\nu_{\text{max}}}|\nu_{m,t}|,
\end{align}
where $\nu_{\text{max}}$ is the given maximum gradient variance, and $p_{2,m,t}$ is the transmit power for the gradient variance. Accordingly, we denote the corresponding peak transmit power as $P_\nu$ with 
\begin{align}
    \left|x_{2,m,t}\right|^2 = p_{2,m,t} \left(\frac{\nu_{m,t}}{\nu_{\text{max}}}\right)^2 \leq P_\nu \label{pv_c}.
\end{align}
Besides the local gradient and the associated gradient statistics, each WD also needs to offload locally-estimated distance of the sensing target to the BS for collaborative localization per FL round. In this paper, we embed the estimated target distance information into the imaginary part of the gradient mean signal, i.e.,
\begin{align}
    x_{3,m,t} =\frac{\sqrt{p_{3,m,t}}}{\sqrt{2}}\left(\frac{1}{\mu_{\text{max}}}\mu_{m,t}+j\frac{1}{R_{\text{max}}}\tilde{R}_{m,t}\right) \label{tx_ud},
\end{align}
where $\mu_{\text{max}}$ is the maximum gradient mean, $R_{\text{max}}$ signifies the maximum sensing range, and $p_{3,m,t}$ is the corresponding transmit power. Then, the associated peak power constraint is 
\begin{align}
    \left|x_{3,m,t}\right|^2 = \frac{p_{3,m,t}}{2} \left|\frac{\mu_{m,t}}{\mu_{\text{max}}}+j\frac{\tilde{R}_{m,t}}{R_{\text{max}}}\right|^2 \leq P_\mu \label{pu_c}.
\end{align}
The total transmit power of each WD per round is restricted by 
\begin{align}
    P_G+P_\mu+P_\nu\leq P_{\text{max}} \label{eq:budge},
\end{align}
where $P_G=DP_g$ is the power for transmitting $D$-dimension local gradient vector.

Notice that by harnessing the inherent autocorrelation property of local gradient signals $\mathbf{x}_{1,m,t}$ for target sensing and subsequently embedding the local sensing results into the gradient statistic signal, we achieve the zero-overhead wireless sensing during the FL process. The detailed sensing model and the proposed cooperative target localization method will be introduced in Section \ref{senmodel} and Section \ref{sec_sensing}, respectively. In addition, unlike existing works assuming error-free transmission of local gradient statistics, we will show in this paper that, the selection of $P_\nu$ and $P_\mu$ has an essential impact on both the learning and sensing performance, which necessitates the joint optimization with the transceiver for per-round model aggregation.

\subsection{Reception of Sensing Results and Over-the-Air Model Aggregation}

The wireless channel coefficient between WD $m$ and the BS is denoted as $\mathbf{h}_{bs,m} \in \mathbb{C}^{N \times 1}$. Suppose that the signals $x_{2,m,t}$ and $x_{3,m,t}$ carrying both local gradient statistics and local sensing results are transmitted to the BS through orthogonal channels. Then, the corresponding received signals $\mathbf{y}_{2,m,t}$ and $\mathbf{y}_{3,m,t}$ are given by 
\begin{align}
\mathbf{y}_{2,m,t} &= \mathbf{h}_{bs,m} x_{2,m,t} + \boldsymbol{\omega}_{2,m,t}, \\\mathbf{y}_{3,m,t} &= \mathbf{h}_{bs,m} x_{3,m,t} + \boldsymbol{\omega}_{3,m,t}, 
\end{align}
where $\boldsymbol{\omega}_{2,m,t}$ and $\boldsymbol{\omega}_{3,m,t}$ represent the additive white Gaussian noise (AWGN) following $\mathcal{CN}(\boldsymbol{0},\sigma^2\boldsymbol{I}) \in \mathbb{C}^{N \times 1}$.

With coherent detection, the BS post-processes the received signals by
\begin{align}
    r_{2,m,t}&\!=\!\frac{\nu_{\text{max}}\mathbf{f}^H\mathbf{y}_{2,m,t}}{\sqrt{p_{2,m,t}}\mathbf{f}^H\mathbf{h}_{bs,m}}\!=\!|\nu_{m,t}|\!+\!\frac{\nu_{\text{max}}\mathbf{f}^H\boldsymbol{\omega}_{2,m,t}}{\sqrt{p_{2,m,t}}\mathbf{f}^H\mathbf{h}_{bs,m}}, \label{r2_trans}\\
    r_{3,m,t}&\!=\!\frac{\mu_{\text{max}}\mathbf{f}^H\mathbf{y}_{3,m,t}}{\sqrt{p_{3,m,t}}\mathbf{f}^H\mathbf{h}_{bs,m}} \notag\\
    &\!=\!\frac{1}{\sqrt{2}}\!\left(\!\mu_{m,t}+j\frac{{\mu}_{\text{max}}}{R_{\text{max}}}\tilde{R}_{m,t}\!\right)\!+\!\frac{\mu_{\text{max}}\mathbf{f}^H\boldsymbol{\omega}_{3,m,t}}{\sqrt{p_{3,m,t}}\mathbf{f}^H\mathbf{h}_{bs,m}},\label{r3_trans}
\end{align}
where $\mathbf{f} \in \mathbb{C}^{N \times 1}$ is the normalized received beamforming vector with $||\mathbf{f}||^2_2=1$. Accordingly, the BS estimates the gradient statistics as $\hat{\nu}_{m,t}=\left|r_{2,m,t}\right|\label{nu_hat}$ and $ \hat{\mu}_{m,t}=\sqrt{2}\text{Re}\{r_{3,m,t}\}$.
Along with the local gradient statistics estimation, the BS can concurrently detect the local sensing results as 
\begin{align}
\hat{R}_{m,t}&=\sqrt{2}\frac{R_{\text{max}}}{\mu_{\text{max}}}\text{Im}\{r_{3,m,t}\} .\label{r_hat}
\end{align}
Given that the BS is interested in the weighted sum of the local gradients instead of the individual ones, WDs exploit the superposition property of wireless multi-access channels to simultaneously transmit amplitude-modulated local gradient signals $\{x_{1,m,t}[d]\}$ entry by entry using the same time-frequency communication resources. Accordingly, the transmission of the local gradients, the corresponding gradient statistics, and the locally sensed distance information requires a total of $D+2M$ time slots per training round.\footnote{Notice that existing works on over-the-air FL typically overlook the $2M$ time slots allocated for the transmission of local gradient statistics. This paper exploits the inherent wireless sensing functionality embedded within the original resource budget of over-the-air FL, i.e., 
$D+2M$ slots per round, at no additional cost. Specifically, we leverage the $D$-dimensional local gradient signals to perceive the target distance at each WD, and the locally estimated distances are then embedded into the learning-statistics signals for cooperative localization at the BS.   %Note that ``zero-overhead sensing’’ means that no extra time slots are introduced for sensing, since the BS already requires $2M$ slots to collect the WDs’ gradient statistics, and the locally estimated distances are embedded into the statistic signals for uplink transmission.
} 
Suppose that the length of each time slot is $T_{slot}$. With over-the-air model aggregation, the received signal $\mathbf{y}_{1,t}[d]$ at round $t$ for each time slot $d, 1\leq d\leq D$, is
\begin{align}\label{received_signal}
    \mathbf{y}_{1,t}[d] = \sum_{m=1}^{M}{\mathbf{h}_{bs,m} x_{1,m,t}[d]} + \boldsymbol{\omega}_{1,t} ,\ \ 1\leq d\leq D,
\end{align}
where $\boldsymbol{\omega}_{1,t}$ is the AWGN following $\mathcal{CN}(\boldsymbol{0},\sigma^2\boldsymbol{I}) \in \mathbb{C}^{N \times 1}$. Then, the BS estimates the weighted sum of local gradients $\hat{r}_{t}[d]$ by a linear estimator, i.e., 
\begin{align} \label{key1}
    \hat{r}_{t}[d] &= \frac{1}{\sqrt{\eta_t}} \mathbf{f}^H \mathbf{y}_{1,t}[d] +\sum_{m=1}^{M} K_m \hat{\mu}_{m,t},
\end{align}
where $\eta_t$ is the normalization scalar at round $t$.

\subsection{Sensing Model}\label{senmodel}
\begin{figure}[!t]
    \centering
    \includegraphics[width=0.45\textwidth]{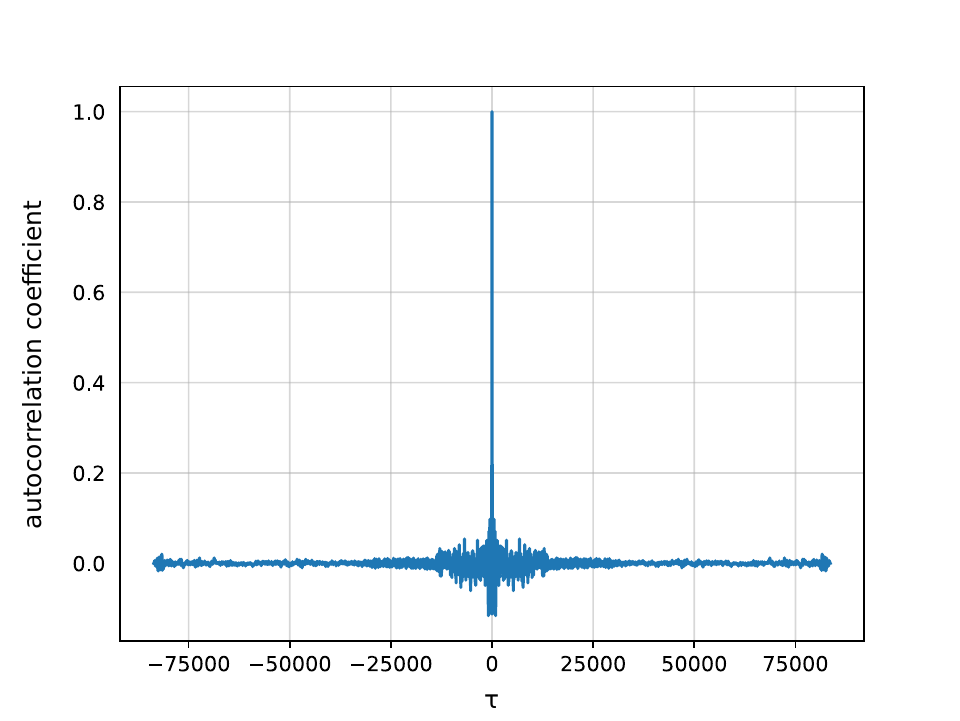}
    \caption{Autocorrelation coefficient of normalized local gradient, considering an FL task with CNN over CIFAR-10 dataset.}
    \label{fig_corr}
\end{figure}
A practical learning model may contain thousands or more parameters, implying that $D$ can be extremely large. Meanwhile, the normalized local gradient $\bar{g}_{m,t}[d]$ can be modeled as an independent zero-mean variable, leading to 
\begin{align}
    \sum_{d=1}^{D}\bar{g}_{m,t}[d]\bar{g}_{m,t}[d-\tau] \approx D\delta(\tau).\label{corr}
\end{align}
For instance, considering an FL task with convolutional neural network (CNN) over CIFAR-10 dataset, Fig. \ref{fig_corr} shows the autocorrelation coefficient of $\bar{g}_{m,t}[d]$ as the function of $\tau$. Fig. \ref{fig_corr} demonstrates the inherent autocorrelation property
of local gradients as shown in (\ref{corr}). 

In this regard, $\mathbf{x}_{1,m,t}$ serves as a promising candidate for passive wireless sensing. This is analogous to the case with
orthogonal frequency-division multiplexing (OFDM) waveform \cite{berger2010signal}. Accordingly, per FL round $t$, WD $m$ intends to analyze the echo signal $\tilde{\mathbf{x}}_{1,m,t}=[\tilde{x}_{1,m,t}[1],\cdots, \tilde{x}_{1,m,t}[d],\cdots,\tilde{x}_{1,m,t}[D]]$ reflected by the sensing target for object positioning, where 
\begin{align}
    \tilde{x}_{1,m,t}[d]\! =\! &\sum_{n\neq m}\!\underbrace{\!\left(\alpha_{m,n}x_{1,n,t}[d\!-\!\tau_{m,n}]\! +\! h_{m,n}x_{1,n,t}[d\!-\!\tau_{m,n}^{\mathrm{direct}}]\right)\!}_{\Theta_{m,n}[d](\text{inter-device interference})}  \notag \\ &+ \underbrace{\alpha_{m,m}x_{1,m,t}[d-\tau_{m,m}]}_{x_{1,m,t}^\mathrm{echo}[d] (\text{sensing echo})} + \omega^\mathrm{echo}_{m,t}.
\end{align} 
Specifically, $\tau_{m,n}= \left\lfloor\displaystyle{\frac{R_m}{cT_{slot}}+\frac{R_n}{cT_{slot}}}\right\rfloor$ is the number of slots during which the local gradient signal of WD $n$ travels to the WD $m$ via the target reflection. $\tau_{m,n}^{\mathrm{direct}}= \left\lfloor\displaystyle{\frac{R_{m,n}}{cT_{slot}}}\right\rfloor$ is the number of slots during which the local gradient signal of WD $n$ travels to the WD $m$ directly. $c$ is the speed of light. In addition, $\alpha_{m,n}$ is the radar target channel coefficient between WD $n$ and WD $m$. $h_{m,n}$ is the channel coefficient between WD $n$ and WD $m$, and $\omega^\mathrm{echo}_{m,t} \sim\mathcal{CN}(0,\sigma^2)$ is the AWGN.

\section{Distance Estimation and Target Localization}\label{sec_sensing}
In this section, we investigate zero-overhead sensing approaches by purely harnessing per-round FL local gradient signals. Specifically, a matched-filtering-based distance estimation method is proposed, based on which we develop a robust trilateration approach for target localization.

\subsection{Distance Estimation via Matched Filtering}\label{subsec_de}

To estimate the distance between the sensing target and WD $m$, we first apply the matched filter by correlating the conjugate of the known local gradient signal $\mathbf{x}_{1,m,t}$ with the echo signal $\tilde{\mathbf{x}}_{1,m,t}$. %\textcolor{black}{Compared with conventional over-the-air FL, the proposed sensing-native FL framework imposes no extra synchronization requirement for sensing, since each WD performs matched filtering with the echo of its own local gradient signal.} 
The output of the matched filter is
\textcolor{black}{
\begin{align}
    \Upsilon_{m,t}[i]\! &= \!\frac{1}{D}\sum_{d=1}^{D} x_{1,m,t}^*[d-i] \cdot \tilde{x}_{1,m,t}[d] \notag \\
    &= \frac{1}{D}\sum_{d=1}^{D}\! x_{1,m,t}^*[d\!-\!i]\!\!\left(\!\!x_{1,m,t}^\mathrm{echo}[d]\! + \!\! \sum_{n\neq m}\!\!\Theta_{m,n}\![d] \!+\! \omega^\mathrm{echo}_{m,t}\!\!\right)\! \notag \\
    &\overset{(a)}\approx \frac{1}{D}\sum_{d=1}^{D} x_{1,m,t}^*[d-i]\left(x_{1,m,t}^\mathrm{echo}[d] + \! \omega^\mathrm{echo}_{m,t}\right) \notag \\
    &\overset{(b)}\approx \frac{1}{D}\sum_{d=1}^{D} x_{1,m,t}^*[d-i] \cdot x_{1,m,t}^\mathrm{echo}[d],
\end{align}
}
\textcolor{black}{
where (a) is due to the statistical orthogonality among the normalized local gradient signals across different WDs; (b) follows from the fact that the cross-correlation term between the normalized local gradient signal and the zero-mean AWGN becomes negligible for sufficiently large $D$, i.e., $\frac{1}{D}\sum_{d=1}^{D}x_{1,m,t}^{*}[d-i]\omega_{m,t}^{\mathrm{echo}}\approx E[x_{1,m,t}^{*}[d-i]\omega_{m,t}^{\mathrm{echo}}]=E[x_{1,m,t}^{*}[d-i]]E[\omega_{m,t}^{\mathrm{echo}}]=0$. } 
%\textcolor{red}{ Here the actual mechanism of this term will vanish is that, the matched-filtering noise $\frac{1}{D}\sum_{d=1}^{D}x_{1,m,t}^{*}[d-i]\omega_{m,t}^{\mathrm{echo}}$ is also a random variable, but has zero mean and variance $\frac{\sigma^2}{D}|p_{1,m,t}|^2$, and hence the variance becomes negligible as $D$ grows large. So when D is sufficiently large, this random variable will converge to a constant $0$. So I am hesitant whether we should detailedly discribe this mechanism or just write as the blue one.}

These properties allow WD $m$ to detect the presence of its own local gradient component in the complicated echo signal, i.e.,
\textcolor{black}{
\begin{align}\label{mf}
    \Upsilon_{m,t}[i]
    &= \frac{\alpha_{m,m}|p_{1,m,t}|^2}{D} \sum_{d=1}^{D} \bar{g}_{m}[d-i] \bar{g}_{m}[d-\tau_{m,m}] \notag \\
    &\overset{(c)}\approx
    \begin{cases}
        \alpha_{m,m}|p_{1,m,t}|^2, & i = \tau_{m,m},\\
        0, & \text{otherwise},
    \end{cases}
\end{align}
}
where $(c)$ holds due to the autocorrelation property of the normalized local gradients in (\ref{corr}).

According to (\ref{mf}), the reflection path delay $\tau_{m,m}$ of WD $m$'s gradient signal at round $t$ is identified by
\begin{align}
    \ddot{\tau}_{m,m,t} = \arg\max_i |\Upsilon_{m,t}[i]|.
\end{align}
The estimated distance between WD $m$ and the sensing target is calculated as $\ddot{R}_{m,t} = \frac{c T_{\rm slot}}{2}\ddot{\tau}_{m,m,t}$.
Notice that the accuracy of the above per-round distance estimation approach relies on the i.i.d. assumption of local gradients across devices. In practice, to cope with scenarios where the across-device i.i.d. assumptions do not strictly hold, WD $m$ can record the historical estimation results in $\mathcal{R}_m^t = \{\ddot{R}_{m,1}, \ddot{R}_{m,2}, \cdots, \ddot{R}_{m,t}\}$ and select the majority to upload, i.e.,
\begin{align}\label{majority_select}
    \tilde{R}_{m,t} = \underset{r \in \mathcal{R}_m^t}{\arg\max} \sum_{i=1}^{t} \mathbb{I}\left( \ddot{R}_{m,i} = r \right),
\end{align}
where $\mathbb{I}(\cdot)$ is the indicator function. The refined distance $\tilde{R}_{m,t}$ is then uploaded to the BS via (\ref{tx_ud}).

\subsection{Target Localization}\label{rst-dc}

By receiving per-round distance estimate $\hat{R}_{m,t}$ from each WD via (\ref{r_hat}), the BS calculates the average distance between the sensing target and WD $m$, i.e., $\bar{R}_m = \frac{1}{T} \sum_{t=1}^{T} \hat{R}_{m,t}, \ \forall m$.

In addition to the across-device dependency of local gradients in practice, per-round distance error also results from the wireless channel fading and communication noise when transmitting $\tilde{R}_{m,t}$ to the BS, which further leads to inaccurate target localization. 

Typically, the BS leverages the trilateration method to determine the 3D coordinates of the object, which requires the estimated distances reported by at least three geographically-separated WDs. However, diverse locations and local datasets of WDs lead to heterogeneous errors of the distance estimations received at the BS. How to select these reported distance estimates for accurate object positioning is challenging.

In this paper, we propose a robust trilateration method with density-based clustering. Specifically, we first list all possible combinations of three different WDs, with the set of device indices for the $\gamma$-th combination defined as $\Lambda_\gamma$, $\gamma=1,2,\ldots,\Gamma$. The total number of combinations is $\Gamma = \binom{M}{3}$. For each combination $\Lambda_\gamma$, we determine the 3D position of the object $\mathbf{q}_\gamma = (\hat{x}_\gamma, \hat{y}_\gamma, \hat{z}_\gamma)$ by solving the following least-squares estimation problem:
\begin{align}\label{q_set}
    \mathbf{q}_\gamma = \arg\min_{\mathbf{q}\in \mathbb{R}^3} \sum_{m \in \Lambda_\gamma} \left(\| \mathbf{q} - \boldsymbol{\xi}_m \|_2  - \bar{R}_m \right)^2,
\end{align}
where $\boldsymbol{\xi}_m = (x_m, y_m, z_m)$ is the known position of WD $m$. Accordingly, a total number of $\Gamma$ candidates $\mathcal{Q} = \{ \mathbf{q}_\gamma \}_{\gamma=1}^\Gamma$ are generated.

Due to the heterogeneous distance estimation error of each WD, simply averaging $\mathbf{q}_\gamma$ from each device combination results in inaccurate localization performance. Instead, we adopt the density-based clustering (e.g., DBSCAN algorithm \cite{ester1996density}) for positioning candidates in $\mathcal{Q}$, producing $\kappa$ clusters $\{ \mathcal{C}_1, \ldots, \mathcal{C}_\kappa \}$ with poor-performance $\mathbf{q}_\gamma$ identified as outliers.
%\footnote{The performance of the DBSCAN-based clustering approach depends on two hyperparameters, i.e., the radius of the neighborhood $d_{neigh}$ and the minimum number of points $S_{pts}$ required to satisfy the core-point condition \cite{ester1996density}. In our simulations, we fix $S_{pts}$  and gradually increase $d_{neigh}$ until at least one cluster is identified.} 
We subsequently select the cluster $\mathcal{C}^* = \arg \max_{\kappa} |\mathcal{C}_\kappa|$ with the largest size, and compute its geometric centroid $\mathbf{\Xi} = (x^*, y^*, z^*)$ as the resulting sensing target position, i.e., 
\begin{align}
        \mathbf{\Xi} = \arg\min_{\mathbf{\Xi} \in \mathbb{R}^3} \sum_{\mathbf{q}_i \in \mathcal{C}^*} \|\mathbf{\Xi}-\mathbf{q}_i\|_2^2.
        \label{eq:centroid}
    \end{align}
\section{Learning Convergence Analysis and Problem Formulation}\label{sec_analysis}
In this section, we first analyze the convergence gap of the proposed sensing-native over-the-air FL, based on which the joint optimization of transmit power budget allocation and receiver beamforming is formulated to maximize the FL performance.

\subsection{Convergence Analysis}\label{sec:perf_analysis}
To facilitate the convergence analysis, we first introduce the following assumptions on the loss function $F(\cdot)$:
\begin{assumption}\label{L-smooth}
    $F$ is Lipschitz smooth with parameter $L$. That is,
\begin{align}
        F(\mathbf{w}) \!\leq\! F(\mathbf{w}') &\!+\! (\mathbf{w}-\mathbf{w}')^T \nabla F(\mathbf{w}^{\prime}) \!+\!\frac{L}{2} ||\mathbf{w}-\mathbf{w}^{\prime}||_2^2.
    \end{align}
\end{assumption}

\begin{assumption}\label{PL-inequality}
    $F$ satisfies the Polyak-Lojasiewicz (PL) inequality. Let $\textbf{w}^*=\arg\min_\textbf{w}F(\textbf{w})$. The loss $F$ is PL if, for all \textbf{w}, we have
\begin{align}
        ||\nabla F(\mathbf{w})||_2^2 \geq 2S(F(\mathbf{w})-F(\mathbf{w}^*)).
    \end{align}
\end{assumption}

\begin{assumption}\label{twice_diff}
    $F$ is twice-continuously differentiable.
\end{assumption}

\textcolor{black}{
%Notice that the above assumptions cover general non-convex loss functions in practical learning tasks with modern deep neural networks.
Notice that the above assumptions encompass general non-convex loss functions commonly encountered in practical learning tasks involving modern deep neural networks \cite{r1,allen2019learning,li2019convergence,stich2018local}. In particular, unlike strong convexity, the PL inequality \cite{polyak1964gradient} in \textit{\textbf{Assumption \ref{PL-inequality}}} does not require the loss function to be convex and permits the existence of multiple global minimizers, making it compatible with the highly non-convex loss landscapes of deep learning. Moreover, recent studies \cite{allen2019learning, karimi2016linear} have demonstrated that sufficiently overparameterized neural networks, which are prevalent in modern practice, provably satisfy the PL condition. }

%In particular, the PL inequality in Assumption 2 does not require strong convexity of the loss functions, and is satisfied by several practically relevant models, such as least-squares and logistic regression; Moreover, recent analyses of overparameterized neural networks suggest that the training trajectory may exhibit a benign local geometry, which makes the PL condition a meaningful analytical assumption.\cite{karimi2016linear}.

Recall that the global model update can be rewritten as $\mathbf{w}_{t+1} = \mathbf{w}_t - \lambda(\nabla F(\mathbf{w}_t)-\mathbf{e}_t)$,
where the per-round aggregation error $\mathbf{e}_t$ is given by
\begin{align}\label{et}
    \mathbf{e}_t &=  \frac {1}{\sum_{m=1}^MK_m}(\mathbf{r}_t- \hat {\mathbf{r}}_t).
\end{align}
We first characterize this error in the following lemma.

\begin{lemma}\label{Le_error_bound}
Under the transmit power constraints (\ref{pg_c}), (\ref{pv_c}), and (\ref{pu_c}), by setting the transmitter precoding factor and receiver normalizing scalar as\footnote{\textcolor{black}{For each model aggregation round in practice, each WD first uploads its local gradient variance to the BS via (\ref{tx_nu}). Then, the BS determines the transmitter precoding factor in (\ref{pg}) for each WD and
receiver normalizing scalar in (\ref{eta}). Upon receiving the feedback of $p_{1,m,t}$ from the BS, the WDs transmit their local gradients via (\ref{tx_grad}), which are leveraged not only for over-the-air model aggregation but also for zero-overhead distributed wireless sensing. Afterward, the locally estimated distance is reported through the statistics-bearing signal in (\ref{tx_ud}).}}
\begin{align}
        &p_{1,m,t}=\frac{K_m\sqrt{\eta_t}\hat{\nu}_{m,t}}{\mathbf{f}^H\mathbf{h}_{bs,m}},\quad \forall m, t,\label{pg} \\
        &\eta_t=\min_{m\in M}\frac{P_G|\mathbf{f}^H\mathbf{h}_{bs,m}|^2}{DK_m^2\hat{\nu}_{m,t}^2}, ~\forall t \label{eta}, 
    \end{align}
the expected norm of per-round aggregation error $||\mathbf{e}_{t}||_2^2$ is upper-bounded by
\begin{align}\label{L3_1}
        \mathbb{E}[||\mathbf{e}_{t}||_2^2] \!\leq\! E_1(P_\mu,P_\nu,\mathbf{f}) \!+\! E_2(P_G,\mathbf{f}) \!+\! E_3(P_\nu,P_G,\mathbf{f}),
    \end{align}
where 
\begin{align}
        &E_1(P_\mu,P_\nu,\mathbf{f}) \!=\! \frac{D\sigma^2} { K^2} \left(\! \frac{\mu_{\text{max}}^2}{P_\mu}+\frac{\nu_{\text{max}}^2}{P_\nu} \!\right)\!\sum_{m=1}^M \! \frac{K_m^2}{|\mathbf{f}^H\mathbf{h}_{bs,m}|^2}, \\
        &E_2(P_G,\mathbf{f}) = \frac{D^2\sigma^2\nu_{\text{max}}^2} {P_G K^2} \max_{m\in M}\frac{K_m^2}{|\mathbf{f}^H\mathbf{h}_{bs,m}|^2}, \\
        &E_3(P_\nu,P_G,\mathbf{f}) = \frac{D^2\sigma^4\nu_{\text{max}}^2} {P_GP_\nu K^2} \max_{m\in M} \frac{K_m^2}{|\mathbf{f}^H\mathbf{h}_{bs,m}|^4}.
    \end{align}
\end{lemma}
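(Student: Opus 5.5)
The plan is to write the per-entry aggregation error explicitly by substituting the precoder (\ref{pg}) into the received signal (\ref{received_signal}) and the estimator (\ref{key1}). Since $x_{1,m,t}[d]=p_{1,m,t}\bar g_{m,t}[d]$ and $\mathbf{f}^H\mathbf{h}_{bs,m}p_{1,m,t}/\sqrt{\eta_t}=K_m\hat\nu_{m,t}$, we get
\begin{align*}
\hat r_t[d]=\sum_{m=1}^M K_m\hat\nu_{m,t}\bar g_{m,t}[d]+\frac{\mathbf{f}^H\boldsymbol{\omega}_{1,t}}{\sqrt{\eta_t}}+\sum_{m=1}^M K_m\hat\mu_{m,t}.
\end{align*}
Subtracting this from $r_t[d]=\sum_m K_m(\nu_{m,t}\bar g_{m,t}[d]+\mu_{m,t})$ splits the error into three parts: a statistics-mismatch term $\sum_m K_m(\nu_{m,t}-\hat\nu_{m,t})\bar g_{m,t}[d]$, a mean-error term $\sum_m K_m(\mu_{m,t}-\hat\mu_{m,t})$, and a receiver-noise term $-\mathbf{f}^H\boldsymbol{\omega}_{1,t}/\sqrt{\eta_t}$. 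The noise samples $\boldsymbol{\omega}_{1,t}$, $\boldsymbol{\omega}_{2,m,t}$, $\boldsymbol{\omega}_{3,m,t}$ are independent and zero mean. So after squaring and summing over $d$, I will take the expectation and keep only the variances, with the cross terms vanishing (conditionally on $\hat\nu$ where needed). One more normalization step is needed: the stated bounds have no factor $1/2$ from the $\sqrt2$ in (\ref{r3_trans}), so I will treat $K=\sum_m K_m$ and absorb constants, taking $p_{2,m,t},p_{3,m,t}$ at their peak values $P_\nu,P_\mu$ (worst case $\nu_{m,t}=\nu_{\max}$, $|\cdot|\le\sqrt2$ in (\ref{pu_c})).

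\textbf{Statistics terms (giving $E_1$).} From (\ref{r3_trans}), $\hat\mu_{m,t}-\mu_{m,t}=\sqrt2\,\mathrm{Re}\{\mu_{\max}\mathbf{f}^H\boldsymbol{\omega}_{3,m,t}/(\sqrt{p_{3,m,t}}\mathbf{f}^H\mathbf{h}_{bs,m})\}$. Since $\|\mathbf f\|=1$, its variance is at most $\mu_{\max}^2\sigma^2/(P_\mu|\mathbf{f}^H\mathbf{h}_{bs,m}|^2)$. Independence across $m$ kills the cross terms, and summing over the $D$ entries gives the $\mu_{\max}^2/P_\mu$ part of $E_1$. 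For the variance mismatch, I will use $|\hat\nu-\nu|\le|r_{2,m,t}-\nu|$ (reverse triangle inequality, since $\nu\ge0$), which gives a variance of at most $\nu_{\max}^2\sigma^2/(P_\nu|\mathbf{f}^H\mathbf{h}_{bs,m}|^2)$. Combining this with $\sum_d\bar g_{m,t}[d]^2=D$ and the independence of the $\boldsymbol{\omega}_{2,m,t}$ across $m$ yields the $\nu_{\max}^2/P_\nu$ part of $E_1$. The cross terms between the mean error and the mismatch are zero because $\sum_d\bar g_{m,t}[d]=0$ and the noises are independent.

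\textbf{Receiver-noise term (giving $E_2$ and $E_3$).} This is the main obstacle, because $\eta_t$ depends on the noisy $\hat\nu_{m,t}$. Its contribution is $D\sigma^2/\eta_t=\frac{D^2\sigma^2}{P_G}\max_m \frac{K_m^2\hat\nu_{m,t}^2}{|\mathbf{f}^H\mathbf{h}_{bs,m}|^2}$. I will bound the max of the expectation-dominated quantity by the corresponding sum/max, then use $\mathbb{E}[\hat\nu_{m,t}^2]\le\mathbb{E}|r_{2,m,t}|^2=\nu_{m,t}^2+\nu_{\max}^2\sigma^2/(P_\nu|\mathbf{f}^H\mathbf{h}_{bs,m}|^2)$ together with $\nu_{m,t}\le\nu_{\max}$. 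Splitting the max of a sum into a sum of maxes then produces exactly $E_2$ (the $\nu_{m,t}^2$ part) and $E_3$ (the noise part, with $|\mathbf{f}^H\mathbf{h}_{bs,m}|^4$ in the denominator). The delicate point is exchanging $\mathbb{E}$ and $\max$. I would handle it by noting that the $\omega_1$-noise is independent of $\eta_t$, so it can be conditioned on first. Then, rather than pass the expectation inside the max, I would bound $\max_m a_m\hat\nu_m^2\le\max_m a_m\nu_{\max}^2+\max_m a_m(\hat\nu_m^2-\nu_m^2)_+$ and control the second piece in expectation, accepting a modeling step as the paper likely does.

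Finally, I will add the three variance contributions, divide by $K^2$, and obtain (\ref{L3_1}).
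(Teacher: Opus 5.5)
Your proposal follows the paper's proof essentially step by step. It uses the same three-way split of $e_t[d]$: the mean-estimation noise, the scaling mismatch $K_m(\nu_{m,t}-\hat\nu_{m,t})\bar g_{m,t}[d]$, and $\mathbf{f}^H\boldsymbol{\omega}_{1,t}/\sqrt{\eta_t}$. It gets $E_1$ the same way, from $p_{3,m,t}=P_\mu$, $p_{2,m,t}=P_\nu$ and $\sum_d \bar g_{m,t}[d]^2=D$. For the receiver noise it makes the same move: write $1/\eta_t$ as a max over $m$, put $\mathbb{E}[\hat\nu_{m,t}^2]=\nu_{m,t}^2+\nu_{\max}^2\sigma^2/(P_\nu|\mathbf{f}^H\mathbf{h}_{bs,m}|^2)$ inside the max, use $\nu_{m,t}\le\nu_{\max}$, and split the max to obtain $E_2+E_3$.

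Conditioning on the statistics noise first, so that you work with $\mathbb{E}[D\sigma^2/\eta_t]$, is more accurate than the paper's $D\sigma^2/\mathbb{E}[\eta_t]$. Your reverse-triangle step $|\hat\nu_{m,t}-\nu_{m,t}|\le|r_{2,m,t}-\nu_{m,t}|$ makes explicit an inequality the paper uses tacitly. Your factor-$1/2$ worry is unfounded: the $\sqrt2$ in $\hat\mu_{m,t}=\sqrt2\,\mathrm{Re}\{r_{3,m,t}\}$ exactly cancels the halving of variance by $\mathrm{Re}$, so no constants need absorbing.

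Two steps do not go through as you state them, although the paper glosses over both as well.

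\emph{The positive-part device does not justify exchanging $\mathbb{E}$ and $\max$.} Write $r_{2,m,t}=\nu_{m,t}+n_m$. Then $\mathbb{E}[(\hat\nu_{m,t}^2-\nu_{m,t}^2)_+]\ge 2\nu_{m,t}\mathbb{E}[(\mathrm{Re}\,n_m)_+]=\nu_{m,t}\nu_{\max}\sigma/(\sqrt{\pi P_\nu}\,|\mathbf{f}^H\mathbf{h}_{bs,m}|)$. This is first order in $\sigma$, so after multiplying by $D^2\sigma^2/P_G$ it gives an $O(\sigma^3)$ term that $E_3\propto\sigma^4$ does not dominate for small $\sigma$. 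You would also still need $\mathbb{E}[\max_m(\cdot)]\le\max_m\mathbb{E}[(\cdot)]$, which is the reverse of Jensen's inequality; rigorously you only get a sum over $m$. Your argument therefore ends at the same heuristic substitution of $\mathbb{E}[\hat\nu_{m,t}^2]$ inside the max as the paper's. Present it as an approximation, not as a bound.

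\emph{Independence of the $\boldsymbol{\omega}_{2,m,t}$ across $m$ does not remove the cross terms in the mismatch part.} These terms are $K_mK_n\,\mathbb{E}[\nu_{m,t}-\hat\nu_{m,t}]\,\mathbb{E}[\nu_{n,t}-\hat\nu_{n,t}]\sum_d\bar g_{m,t}[d]\bar g_{n,t}[d]$. They survive because $\hat\nu_{m,t}=|r_{2,m,t}|$ is biased upward, since $\mathbb{E}|\nu_{m,t}+n_m|>\nu_{m,t}$. They vanish only under the cross-device orthogonality approximation $\sum_d\bar g_{m,t}[d]\bar g_{n,t}[d]\approx 0$ for $m\neq n$. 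This is the approximation the paper invokes in the matched-filter derivation, and you should cite it explicitly.

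Your treatment of the cross term between the mean error and the mismatch, via $\sum_d\bar g_{m,t}[d]=0$, is exact and fine.
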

\begin{proof}
    See Appendix \ref{proof_L1}.
\end{proof}

From \textit{\textbf{Lemma \ref{Le_error_bound}}}, we observe that the per-round aggregation error comprises the following three distinct components. 

\begin{itemize}
    \item $E_1$ represents the aggregation error caused by noisy transmission of local gradient statistics (i.e., $\mu_{m,t}$ and $\nu_{m,t}$). We observe that better channel conditions or higher budgets allocated to $P_\mu$ and $P_\nu$ lead to a smaller $E_1$.
    \item $E_2$ stands for the over-the-air local-gradient aggregation error, which is dominated by the WD with the worst channel condition. This error can be improved by allocating a higher budget $P_G$ when transmitting local gradients. It is worth noting that with error-free transmission of local gradient statistics, per-round aggregation error is solely determined by $E_2$, which matches the prior analysis in \cite{yang2020federated,r1,cao2021optimized}.
    \item $E_3$ characterizes the coupling effect between the noisy statistics transmission and over-the-air gradient aggregation error. Specifically, the BS can only obtain the estimated variance of the local gradient when determining the optimal transmit scaling $p_{1,m,t}$ in (\ref{pg}) for per-round error minimization. In this regard, inaccurate variance estimation may amplify over-the-air gradient aggregation error, calling for the joint effort of optimizing budgets for $P_G$ and $P_\nu$.
\end{itemize}

Notice that, the transmit power budget allocated to gradient statistics is negatively correlated with that assigned to local gradient transmission due to the total transmit power constraint of each WD. Accordingly, increasing the gradient transmit power budget $P_G$ can lower over-the-air gradient aggregation error $E_2$, but may enlarge the errors $E_1$ and $E_3$ corresponding to the noisy transmission of statistic information due to a smaller power budget $P_\mu, P_\nu$ left. In addition, unlike conventional over-the-air FL transceiver design where the receiver beamforming vector is designed to improve the wireless channel condition of the straggler WD, the noisy transmission of gradient statistics further exacerbates the challenges of beamforming design since the error $E_1$ combines the wireless channel effects of all WDs.

Based on Lemma \ref{Le_error_bound}, we provide the convergence analysis for the proposed sensing-native over-the-air FL framework in the following theorem.

\begin{theorem}\label{Th_gap}
When \textit{\textbf{Assumptions 1-3}} hold and the learning rate $\lambda = \frac{1}{L}$, we have 
\begin{align}\label{theorm1}
        \mathbb{E}\!&\left[F(\mathbf{w}_{t})-F(\mathbf{w}^*)\right] \notag \\
        &\leq\!\Psi^t\!\left(F(\mathbf{w}_0)\!-\!F(\mathbf{w}^*)\right) + \frac {E_1 \!+\! E_2 \!+\! E_3 } {2L} \!\left( \frac{1-\Psi^t}{1-\Psi}\right),
    \end{align}
where the convergence speed $\Psi = 1 - \frac{S}{L}$.
\end{theorem}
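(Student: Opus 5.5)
The plan is the standard one-step descent argument for inexact gradient descent under the PL condition, with the per-round aggregation error controlled by Lemma~\ref{Le_error_bound}.

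\textbf{One-step bound.} Apply Assumption~\ref{L-smooth} with $\mathbf{w}=\mathbf{w}_{t+1}$, $\mathbf{w}'=\mathbf{w}_t$, and substitute the update $\mathbf{w}_{t+1}-\mathbf{w}_t=-\lambda(\nabla F(\mathbf{w}_t)-\mathbf{e}_t)$. This gives
\begin{align*}
F(\mathbf{w}_{t+1})\le F(\mathbf{w}_t)-\lambda\nabla F(\mathbf{w}_t)^T(\nabla F(\mathbf{w}_t)-\mathbf{e}_t)+\frac{L\lambda^2}{2}\|\nabla F(\mathbf{w}_t)-\mathbf{e}_t\|_2^2 .
\end{align*}
With $\lambda=1/L$, expanding the square makes the cross terms $\pm\frac{1}{L}\nabla F(\mathbf{w}_t)^T\mathbf{e}_t$ cancel exactly. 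The one-step bound collapses to
\begin{align*}
F(\mathbf{w}_{t+1})\le F(\mathbf{w}_t)-\frac{1}{2L}\|\nabla F(\mathbf{w}_t)\|_2^2+\frac{1}{2L}\|\mathbf{e}_t\|_2^2 .
\end{align*}
Because the cancellation is exact, I do not need $\mathbf{e}_t$ to be unbiased or independent of $\mathbf{w}_t$. I also do not need Young's inequality, which would only spoil the constants.

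\textbf{Contraction.} Subtract $F(\mathbf{w}^*)$ from both sides. Then use Assumption~\ref{PL-inequality}, $\|\nabla F(\mathbf{w}_t)\|_2^2\ge 2S(F(\mathbf{w}_t)-F(\mathbf{w}^*))$, to get
\begin{align*}
F(\mathbf{w}_{t+1})-F(\mathbf{w}^*)\le \left(1-\frac{S}{L}\right)(F(\mathbf{w}_t)-F(\mathbf{w}^*))+\frac{1}{2L}\|\mathbf{e}_t\|_2^2 .
\end{align*}
Take expectations and apply Lemma~\ref{Le_error_bound} to bound $\mathbb{E}\|\mathbf{e}_t\|_2^2\le E_1+E_2+E_3$. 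This gives $a_{t+1}\le\Psi a_t+\frac{E_1+E_2+E_3}{2L}$, where $a_t:=\mathbb{E}[F(\mathbf{w}_t)-F(\mathbf{w}^*)]$ and $\Psi=1-S/L$.

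The step needing the most care is this use of the lemma. Its bound must hold uniformly in $t$, and conditionally on $\mathbf{w}_t$, so that iterated expectation applies. The worst-case constants $\nu_{\max}$, $\mu_{\max}$ and the fixed $P_G,P_\mu,P_\nu,\mathbf{f}$ make the bound round-independent, so I would take conditional expectation given $\mathbf{w}_t$ first and then total expectation. I would also note that $S\le L$, which follows from combining smoothness with PL, so that $0\le\Psi<1$. Assumption~\ref{twice_diff} is not actually needed beyond guaranteeing smoothness.

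\textbf{Unrolling.} Iterate the recursion $t$ times from $a_0=F(\mathbf{w}_0)-F(\mathbf{w}^*)$. This gives $a_t\le\Psi^t a_0+\frac{E_1+E_2+E_3}{2L}\sum_{i=0}^{t-1}\Psi^i$. Summing the geometric series, $\sum_{i=0}^{t-1}\Psi^i=\frac{1-\Psi^t}{1-\Psi}$, yields exactly (\ref{theorm1}).
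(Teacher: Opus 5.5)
Your proof is correct and follows the paper's argument step for step. You apply the smoothness bound with $\lambda=1/L$, where the cross terms $\pm\frac{1}{L}\mathbf{e}_t^T\nabla F(\mathbf{w}_t)$ cancel exactly; then the PL contraction with $\Psi=1-S/L$; then Lemma~\ref{Le_error_bound}, unrolling, and the geometric sum. Your extra remarks are small rigor improvements over the paper's somewhat loose write-up, not a different route: conditioning on $\mathbf{w}_t$ so the round-independent bound applies, and $S\le L$ so that $\Psi\ge 0$ keeps the unrolled inequalities valid.
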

\begin{proof}
    See Appendix \ref{proof_T1}.
\end{proof}

\textit{\textbf{Theorem \ref{Th_gap}}} indicates that the convergence rate is primarily determined by $\Psi$. As $t$ goes to infinity, the constant error floor is proportional to the aggregation error bound $E_1 + E_2 + E_3$, which quantitatively indicates the learning performance of the proposed sensing-native over-the-air FL framework.

\subsection{Problem Formulation}\label{sec:prob_formulation}
Based on the analysis above, we formulate the joint optimization of transmit power budget allocation and receiver beamforming to minimize the convergence error bound while satisfying the transmit power constraint of each WD, i.e., 
\begin{subequations}
\begin{align}
    \mathbf{P1}:\min_{\mathbf{f},P_G,P_{\mu},P_{\nu}}~
    &\frac{D\sigma^2} { K^2} \left( \frac{\mu_{\text{max}}^2}{P_\mu}+\frac{\nu_{\text{max}}^2}{P_\nu} \right)\!\sum_{m=1}^M  \frac{K_m^2}{|\mathbf{f}^H\mathbf{h}_{bs,m}|^2} \notag \\
    &+\frac{D^2\sigma^2\nu_{\text{max}}^2} {P_G K^2} \max_{m\in M}\frac{K_m^2}{|\mathbf{f}^H\mathbf{h}_{bs,m}|^2} \notag \\
    &+\frac{D^2\sigma^4\nu_{\text{max}}^2} {P_GP_\nu K^2} \max_{m\in M} \frac{K_m^2}{|\mathbf{f}^H\mathbf{h}_{bs,m}|^4}, \label{obj}\\
    {s.t.}\quad\quad
    &\text{C1}:||\mathbf{f}||_2^2 = 1, \\
    &\text{C2}:P_G+P_{\mu}+P_{\nu} \leq P_{\text{max}},  \\
    &\text{C3}:P_G \geq 0 , P_{\mu} \geq 0 , P_{\nu} \geq 0. 
\end{align}
\end{subequations}

Solving Problem ($\mathbf{P1}$) is challenging due to the non-convex nature and strong coupling between the beamforming vector $\mathbf{f}$ and the power budget allocation variables $\{P_G, P_\mu, P_\nu\}$. 

\section{Joint Transmit Power Budget Allocation and Beamforming Vector Optimization}\label{sec_opt}
In this section, we propose an efficient alternating optimization method to solve Problem ($\mathbf{P1}$). Specifically, under the fixed beamforming vector $\mathbf{f}$, we first derive the optimal transmit power budget allocation in closed forms. Accordingly, a successive convex approximation (SCA)-based approach is proposed to optimize the receiver beamforming scheme.

\subsection{Optimal Power Budget Allocation}\label{subsec:power_opt}
Given the receiver beamforming vector $\mathbf{f}$, we derive the optimal power budgets in the following two propositions.

\begin{proposition}\label{prop:P_mu_convex}
The optimal power budget $P_\mu^*$ allocated to local gradients mean is the unique solution of $\hat{J}(P_\mu)=0$, where 
\begin{align}\label{eq:derivative_expression}
    \hat{J}(P_\mu) \!=\! &- \!\frac{a}{P_\mu^2}\!+\!\frac{\left(\!\sqrt{\!c\!+\!b(\!P_{\text{max}}\!-\!P_\mu)} \!+\!\! \sqrt{\!c\!+\!d(\!P_{\text{max}}\!-\!P_\mu\!)}\right)^2}{(P_{\text{max}}-P_\mu)^3} \notag\\
    &\!\times \!\!\left(\! 1 \!+\! \frac{c}{\sqrt{[c\!+\!b(\!P_{\text{max}}\!\!-\!\!P_\mu)][c\!+\!d(P_{\text{max}}\!\!-\!\!P_\mu)]}} \right),
\end{align}
\begin{align}
    a \!&=\! \frac{D\sigma^2\mu_{\text{max}}^2} { K^2} \!\sum_{m=1}^M \!\frac{K_m^2}{|\mathbf{f}^H\mathbf{h}_{bs,m}|^2}, b \!= \!\frac{D\sigma^2\nu_{\text{max}}^2} { K^2} \!\sum_{m=1}^M \!\frac{K_m^2}{|\mathbf{f}^H\mathbf{h}_{bs,m}|^2}, \notag\\
    c &\!=\! \frac{D^2\!\sigma^4\nu_{\text{max}}^2} { K^2} \! \!\max_{m\in M}\! \frac{K_m^2}{|\mathbf{f}^H\!\mathbf{h}_{bs,m}\!|^4}\!, \notag
    d \!=\! \frac{D^2\sigma^2\nu_{\text{max}}^2}{ K^2}\!\!\max_{m\in M}\!\frac{K_m^2}{|\mathbf{f}^H\!\mathbf{h}_{bs,m}\!|^2}.
\end{align}
Here, $\hat{J}(P_\mu)$ is a monotonically increasing function for $P_\mu\in (0,P_{\text{max}})$. Since $\lim_{P_\mu \to 0^+} \hat{J}(P_\mu) < 0$ and $\lim_{P_\mu \to P_{\text{max}}^-} \hat{J}(P_\mu) > 0$, a unique solution $P_\mu^* \in (0, P_{\text{max}})$ exists for $\hat{J}(P_\mu)=0$.
\end{proposition}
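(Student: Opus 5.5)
The plan is to reduce $(\mathbf{P1})$ for fixed $\mathbf{f}$ to a one-dimensional problem in $P_\mu$. I will first minimize out $(P_G,P_\nu)$ in closed form, and then show that $\hat{J}$ is exactly the derivative of the resulting scalar objective.

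With the constants $a,b,c,d$ of the proposition, the objective of $(\mathbf{P1})$ for fixed $\mathbf{f}$ reads
\begin{align*}
J(P_\mu,P_\nu,P_G)=\frac{a}{P_\mu}+\frac{b}{P_\nu}+\frac{d}{P_G}+\frac{c}{P_GP_\nu}.
\end{align*}
Every term is strictly decreasing in each power variable, so at the optimum C2 holds with equality and all powers are positive.

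\textbf{Inner problem.} Set $S=P_{\text{max}}-P_\mu$ and minimize over $P_G+P_\nu=S$. On this line we may write $c=c(P_G+P_\nu)/S$, which turns the objective into
\begin{align*}
\frac{b+c/S}{P_\nu}+\frac{d+c/S}{P_G}.
\end{align*}
Minimizing $\alpha/x+\beta/y$ subject to $x+y=S$ (by Cauchy--Schwarz or Lagrange) gives $(\sqrt{\alpha}+\sqrt{\beta})^2/S$. This yields the closed-form split $P_\nu^*/P_G^*=\sqrt{b+c/S}/\sqrt{d+c/S}$ and the value function
\begin{align*}
V(S)=\frac{\bigl(\sqrt{c+bS}+\sqrt{c+dS}\bigr)^2}{S^2}.
\end{align*}
The remaining problem is to minimize $\tilde{J}(P_\mu)=a/P_\mu+V(P_{\text{max}}-P_\mu)$ over $(0,P_{\text{max}})$.

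\textbf{Derivative.} Write $A=\sqrt{c+bS}$ and $B=\sqrt{c+dS}$. Differentiating gives
\begin{align*}
-V'(S)=\frac{A+B}{S^3}\Bigl[2(A+B)-S\Bigl(\frac{b}{A}+\frac{d}{B}\Bigr)\Bigr].
\end{align*}
Using $bS/A=A-c/A$ and $dS/B=B-c/B$, the bracket simplifies to $(A+B)\bigl(1+c/(AB)\bigr)$. Hence $\tilde{J}'(P_\mu)=-a/P_\mu^2-V'(S)$ coincides with $\hat{J}(P_\mu)$ in (\ref{eq:derivative_expression}). This algebraic simplification is the main place an error could creep in, so I would verify it carefully.

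\textbf{Monotonicity.} The cleanest route is convexity rather than differentiating $\hat{J}$ again.
\begin{itemize}
\item The map $g(P_G,P_\nu)=b/P_\nu+d/P_G+c/(P_GP_\nu)$ is jointly convex on the positive orthant, since $1/(xy)$ is convex there.
\item $V(S)$ is the partial minimization of $g$ over the affine fiber $\{P_G+P_\nu=S\}$. Partial minimization of a jointly convex function over a convex set in $(P_G,P_\nu,S)$ gives a convex function of $S$.
\item Hence $V(P_{\text{max}}-P_\mu)$ is convex in $P_\mu$, because it is a convex function composed with an affine map.
\item Adding the strictly convex term $a/P_\mu$ (with $a>0$) makes $\tilde{J}$ strictly convex, so $\hat{J}=\tilde{J}'$ is strictly increasing on $(0,P_{\text{max}})$.
\end{itemize}

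\textbf{Limits and conclusion.}
\begin{itemize}
\item As $P_\mu\to0^+$, the term $-a/P_\mu^2\to-\infty$ while the second term stays bounded, with $S\to P_{\text{max}}$.
\item As $P_\mu\to P_{\text{max}}^-$, we have $S\to0$, so $(A+B)^2\to4c>0$ and $1+c/(AB)\to2$. The second term then blows up like $8c/S^3\to+\infty$, while $a/P_\mu^2$ stays finite. This uses $c>0$, which holds since $\sigma>0$.
\end{itemize}
By continuity and strict monotonicity, $\hat{J}$ has a unique zero $P_\mu^*$. This zero is the unique minimizer of the strictly convex $\tilde{J}$, and $P_G^*,P_\nu^*$ then follow from the inner closed form.
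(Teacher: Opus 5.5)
Your proof is correct, and it follows the same overall reduction as the paper: eliminate $P_\nu$ and $P_G$ in closed form, then show that the scalar derivative $\hat{J}$ is increasing and has limits of opposite sign. The two substantive steps, however, are justified differently.

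For the inner problem, the paper substitutes $P_\nu=\Delta-P_G$ and solves $\partial G/\partial P_G=0$. Your identity $c/(P_GP_\nu)=(c/S)(1/P_G+1/P_\nu)$ on the fiber turns this into a one-line Cauchy--Schwarz argument.

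For monotonicity, the paper writes the reduced objective as $a/P_\mu+Q^{-2}(\Delta)$, where $Q(\Delta)=(\sqrt{c+b\Delta}-\sqrt{c+d\Delta})/(b-d)=\Delta/(\sqrt{c+b\Delta}+\sqrt{c+d\Delta})$, so $Q^{-2}$ is exactly your $V$. It then computes $\partial^2 J/\partial P_\mu^2$ explicitly and signs it by proving $Q''<0$, with a case split on $b>d$ versus $b<d$. You instead get convexity of $V$ directly from the joint convexity of $g$ and the fact that partial minimization preserves convexity.

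Both arguments ultimately rest on convexity of $V$. Yours avoids the second-derivative computation, the case split, and the removable $0/0$ in the paper's $Q$ when $b=d$. It also makes visible that the whole power-allocation subproblem for fixed $\mathbf{f}$ is a convex program. The paper's route, in exchange, gives an explicit curvature expression.

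You also check explicitly, via $bS/A=A-c/A$, that the derivative of the reduced objective equals the displayed $\hat{J}$. The paper leaves this implicit when it writes $\hat{J}=-a/P_\mu^2+2Q^{-3}\,\partial Q/\partial\Delta$.
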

\begin{proof}
    See Appendix \ref{app:proof_pmu}.
\end{proof}

\begin{proposition}\label{prop:P_g_p_nu}
With the optimal $P^*_\mu$, the optimal power budgets allocated to the local gradients and their variances are given by
\begin{align}
    P_{G}^* &=  \epsilon(P_{\text{max}} - P_\mu^*) , \label{eq:opt_PG_closed}\\
    P_{\nu}^* &= (1-\epsilon)(P_{\text{max}} - P_\mu^*),\label{eq:opt_Pnu_closed}
\end{align}
where
\begin{align}
    \epsilon=\frac{\sqrt{c+d(P_{\text{max}} - P_\mu^*)}} {\sqrt{c+d(P_{\text{max}} - P_\mu^*)}+\sqrt{c+b(P_{\text{max}} - P_\mu^*)}} \label{eq:scale}.
\end{align}
\end{proposition}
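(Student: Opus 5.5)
The plan is to fix $\mathbf{f}$ and $P_\mu=P_\mu^*$, and solve the remaining two-variable problem over $(P_G,P_\nu)$ exactly by a change of variables that makes the objective separable.

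First, I would argue that constraint C2 is active at the optimum. Every term of the objective in (\ref{obj}) is positive, and each is strictly decreasing in each of $P_G$, $P_\mu$, $P_\nu$. So any slack in C2 could be given to, say, $P_G$ to strictly decrease the objective. Hence at the optimum $P_G+P_\nu=Q$, where $Q\triangleq P_{\text{max}}-P_\mu^*>0$ (Proposition \ref{prop:P_mu_convex} gives $P_\mu^*\in(0,P_{\text{max}})$). I would parametrize $P_G=\epsilon Q$ and $P_\nu=(1-\epsilon)Q$ with $\epsilon\in(0,1)$. The endpoints are excluded because the objective blows up there.

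Next, using the constants $a,b,c,d$ of Proposition \ref{prop:P_mu_convex}, the objective becomes
\begin{align}
\frac{a}{P_\mu^*}+\frac{b}{(1-\epsilon)Q}+\frac{d}{\epsilon Q}+\frac{c}{\epsilon(1-\epsilon)Q^2}.
\end{align}
The key step is to split $c=c\,\epsilon+c\,(1-\epsilon)$ in the numerator of the combined fraction. This gives
\begin{align}
\frac{b\epsilon Q+d(1-\epsilon)Q+c}{\epsilon(1-\epsilon)Q^2}=\frac{1}{Q^2}\left(\frac{c+bQ}{1-\epsilon}+\frac{c+dQ}{\epsilon}\right).
\end{align}
So the coupling term $E_3$ is absorbed, and we only need to minimize $A/(1-\epsilon)+B/\epsilon$ with $A=c+bQ>0$ and $B=c+dQ>0$.

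This function is strictly convex on $(0,1)$ and tends to $+\infty$ at both endpoints, so its unique minimizer is the stationary point. Setting $A/(1-\epsilon)^2=B/\epsilon^2$ gives $\epsilon=\sqrt{B}/(\sqrt{A}+\sqrt{B})$, which is exactly (\ref{eq:scale}). Substituting back yields (\ref{eq:opt_PG_closed}) and (\ref{eq:opt_Pnu_closed}).

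As a consistency check, the minimal value is $(\sqrt{A}+\sqrt{B})^2/Q^2$. Differentiating $a/P_\mu+(\sqrt{A}+\sqrt{B})^2/(P_{\text{max}}-P_\mu)^2$ in $P_\mu$ should reproduce $\hat{J}$ in Proposition \ref{prop:P_mu_convex}. This confirms that the joint optimum over $(P_G,P_\mu,P_\nu)$ is obtained by this nested minimization.

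The main obstacle is spotting the decomposition of $c$. Without it, the stationarity condition in $\epsilon$ is a messy rational equation. With it, the problem reduces to the standard $A/(1-\epsilon)+B/\epsilon$ minimization.
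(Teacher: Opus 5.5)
Your proof is correct and follows essentially the same route as the paper's appendix: make the budget constraint active, reduce to a one-dimensional problem in $P_G$ on $P_G+P_\nu=P_{\text{max}}-P_\mu^*$, and solve the stationarity condition. Your splitting of $c$ is the partial-fraction identity $\frac{c}{P_GP_\nu}=\frac{c}{P_{\text{max}}-P_\mu^*}\bigl(\frac{1}{P_G}+\frac{1}{P_\nu}\bigr)$, which spells out the derivative step and the uniqueness claim that the paper only asserts, and your check that the reduced objective differentiates to $\hat{J}$ is also correct.
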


\begin{proof}
See Appendix \ref{app:proof_pnu_pg}.
\end{proof}

From the above propositions, we have the following key observations:
\begin{itemize}
    \item The scaling factor $\epsilon$ in (\ref{eq:scale}) is monotonically increasing with respect to coefficient $d$. That is, when the straggler suffers from a worse channel condition (i.e., a larger $d$), we intend to allocate a higher power budget $P_G$ for local gradient transmission. In addition, the power budget allocated to variance transmission increases with coefficient $b$. This is because a higher transmit power for the gradient statistics is required to combat the deep fading experienced by all WDs. 
    \item When the straggler WD experiences severe channel degradation (i.e., $\max_{m\in M} \frac{{\sigma^2}}{|\mathbf{f}^H\mathbf{h}_{bs,m}|^2} \gg 1$), the coefficient $c$ dominates the power budget allocation (i.e., $c \gg b$ and $c \gg d$). Consequently, the power allocation asymptotically converges to an equal assignment strategy, i.e., $P_G^* \approx P_\nu^* \approx \frac{1}{2}(P_{\text{max}} - P_\mu^*)$. This demonstrates that the transmission of gradient statistics is as critical as the local gradients themselves. Inaccurate statistics transmission affects both wireless sensing and FL performance for the proposed sensing-native over-the-air FL framework.
\end{itemize}

Based on \textit{\textbf{Proposition \ref{prop:P_mu_convex}}}, one can efficiently find the optimal $P_\mu^*$ using bisection search method, based on which the optimal $P_G^*$ and $P_\nu^*$ can be directly obtained via the closed-form expressions in \textit{\textbf{Proposition \ref{prop:P_g_p_nu}}}.

\subsection{Receive Beamforming Vector Optimization}\label{subsec:beamforming_opt}
Given the optimal power budget allocation, the optimization problem over the receiver beamforming is 
\begin{subequations}
\begin{align}
\mathbf{P2}:\min_{\mathbf{f}}~&\left(a'+b'\right)\sum_{m=1}^{M} \frac{K_m^2}{|\mathbf{f}^H\mathbf{h}_{bs,m}|^2}\notag \\ 
&+c'\max_{m \in M} \frac{K_m^2}{|\mathbf{f}^H\mathbf{h}_{bs,m}|^4}+d'\max_{m \in M} \frac{K_m^2}{|\mathbf{f}^H\mathbf{h}_{bs,m}|^2}, \ \ \notag \label{P4} \\
    {s.t.}\quad &||\mathbf{f}||_2^2 = 1, \notag
\end{align}
\end{subequations}
where 
    $a'= \frac{D\sigma^2\mu_{\text{max}}^2} {P_\mu K^2},b'= \frac{D\sigma^2\nu_{\text{max}}^2} {P_\nu K^2},c'= \frac{D^2\sigma^4\nu_{\text{max}}^2} {P_\nu P_G K^2},d'= \frac{D^2\sigma^2\nu_{\text{max}}^2}{P_G K^2}$.

\begin{proposition}
    The feasible region of Problem (\textbf{P2}) can be convexified to the closed unit ball $\|\mathbf{f}\|_2^2 \leq 1$, which preserves the optimality.
\end{proposition}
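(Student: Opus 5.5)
The plan is to show that relaxing the constraint $\|\mathbf{f}\|_2^2=1$ to $\|\mathbf{f}\|_2^2\le 1$ leaves the optimal value of Problem (\textbf{P2}) unchanged. It is also enough to show that some optimizer of the relaxed problem lies on the unit sphere. The key observation is that the objective of (\textbf{P2}), which we denote $G(\mathbf{f})$, depends on $\mathbf{f}$ only through the quantities $|\mathbf{f}^H\mathbf{h}_{bs,m}|^2$. Every term is a nonincreasing function of each of these quantities, because the coefficients $a',b',c',d'$ are positive and $K_m^2>0$.

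The argument would run in three steps.

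\emph{Step 1: scaling.} Take any feasible $\mathbf{f}\neq\mathbf{0}$ of the relaxed problem with $\|\mathbf{f}\|_2<1$ and $G(\mathbf{f})<\infty$. Consider $\tilde{\mathbf{f}}=\mathbf{f}/\|\mathbf{f}\|_2$. Then $|\tilde{\mathbf{f}}^H\mathbf{h}_{bs,m}|^2=|\mathbf{f}^H\mathbf{h}_{bs,m}|^2/\|\mathbf{f}\|_2^2$, which is at least $|\mathbf{f}^H\mathbf{h}_{bs,m}|^2$ for every $m$. In the objective, the sum term and the $d'$-max term scale by the factor $\|\mathbf{f}\|_2^2\le 1$, and the $c'$-max term scales by $\|\mathbf{f}\|_2^4\le 1$. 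This gives
\begin{align}
G(\tilde{\mathbf{f}})\le \|\mathbf{f}\|_2^2\, G(\mathbf{f})\le G(\mathbf{f}), \notag
\end{align}
with strict inequality whenever $\|\mathbf{f}\|_2<1$. The point $\mathbf{f}=\mathbf{0}$, and any $\mathbf{f}$ orthogonal to some $\mathbf{h}_{bs,m}$, gives an infinite objective and is therefore never optimal.

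\emph{Step 2: existence.} I would show that the relaxed problem attains its infimum. Restricted to the compact set $\{\|\mathbf{f}\|_2\le 1\}$ minus the blow-up set, $G$ is lower semicontinuous (extended-valued) and is finite at, for instance, a normalized maximum-ratio vector when the channels are generic. Hence a minimizer exists. By Step 1 this minimizer must satisfy $\|\mathbf{f}^\star\|_2=1$, so it is feasible for (\textbf{P2}), and the two optimal values coincide.

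\emph{Step 3: role of the relaxation.} The relaxed feasible set is convex, which is what the subsequent SCA step needs. The term $1/|\mathbf{f}^H\mathbf{h}|^2$ is not jointly convex in $\mathbf{f}$, but that is handled later by linearizing $|\mathbf{f}^H\mathbf{h}|^2$. So the proposition only needs to establish equivalence of the optimal values and solutions.

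The main obstacle is mild. It is the degenerate directions where $\mathbf{f}^H\mathbf{h}_{bs,m}=0$, which make the objective infinite, and ensuring the infimum is attained rather than approached. I would handle this with the lower-semicontinuity and coercivity argument above. The other delicate point is noting that monotone scaling is legitimate even though the max terms are nonsmooth. This holds because scaling multiplies every argument of the max by the same positive factor, so the maximizing index is preserved.
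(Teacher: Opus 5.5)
Your proposal is correct and uses the same normalization argument as the paper: replacing an interior point $\mathbf{f}$ by $\mathbf{f}/\|\mathbf{f}\|_2$ enlarges every $|\mathbf{f}^H\mathbf{h}_{bs,m}|^2$ and so strictly lowers the monotonically decreasing objective, which means no minimizer of the relaxed problem can lie strictly inside the ball. Your existence step and your treatment of directions with $\mathbf{f}^H\mathbf{h}_{bs,m}=0$ add rigor that the paper's proof by contradiction leaves implicit. The existence step is better phrased as $G$ being an extended-valued lower semicontinuous function on the whole compact ball, since the ball minus the blow-up set is not compact.
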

\begin{proof}
We prove this proposition by contradiction. Suppose that $\mathbf{f}_1^*$ is optimal for $\mathbf{P2}$ with $\|\mathbf{f}_1^*\|_2^2 < 1$. Construct $\mathbf{f}_2 \triangleq \mathbf{f}_1^* / \|\mathbf{f}_1^*\|_2$, which satisfies $\|\mathbf{f}_2\|_2^2 = 1$. Then, we have
\begin{align}
    \left|\mathbf{f}_2^H \mathbf{h}_{bs,m}\right|^2 = \frac{|\mathbf{f}_1^{*H} \mathbf{h}_{bs,m}|^2}{\|\mathbf{f}_1^*\|_2^2} > |\mathbf{f}_1^{*H} \mathbf{h}_{bs,m}|^2.
\end{align}
Given that the objective function is monotonically decreasing in $|\mathbf{f}^H \mathbf{h}_{bs,m}|^2$, $\mathbf{f}_2$ yields a lower objective than $\mathbf{f}_1^*$. This contradicts the optimality assumption of $\mathbf{f}_1^*$. Therefore, the non-convex constraint $\|\mathbf{f}\|_2^2 = 1$ in the original Problem ($\mathbf{P2}$) can be converted to $\|\mathbf{f}\|_2^2 \leq 1$ without loss of optimality.
\end{proof}

Then, by defining $ \Phi_m(\mathbf{f}) \triangleq  |\mathbf{f}^H\mathbf{h}_{bs,m}|^2$, $ u_1 \triangleq \min \frac{\Phi_m^2}{K_m^2}  $ and $ u_2 \triangleq \min \frac{\Phi_m}{K_m^2} $, Problem ($\mathbf{P2}$) can be equivalently expressed as
\begin{subequations}
\begin{align}
\mathbf{P3}:\min_{\mathbf{f}, u_1, u_2}~&\left(a'+b'\right)\sum \frac{K_m^2}{\Phi_m} +\frac{c'}{u_1}+\frac{d'}{u_2}, \label{P5} \\
    {s.t.}\quad &\text{C6}:||\mathbf{f}||_2^2 \leq 1, \\
    &\text{C7}:K_m^2u_1\leq \Phi_m^2, \ \ \forall m, \\
    &\text{C8}:K_m^2u_2\leq \Phi_m, \ \ \forall m.
\end{align}
\end{subequations}

Let $A$, $\{B_m\}$, and $\{C_m\}$ be the dual variables associated with constraints C6, C7, and C8, respectively. Then the Lagrangian of Problem ($\mathbf{P3}$) is given by
\begin{align}\label{lagrangian}
\mathcal{L} = &\left(a'+b'\right)\sum_{m=1}^{M} \frac{K_m^2}{\Phi_m} +\frac{c'}{u_1}+\frac{d'}{u_2} + A(||\mathbf{f}||_2^2 - 1) \notag \\
& \!+ \!\sum_{m=1}^{M} \!B_m (K_m^2 u_1 - \Phi_m^2) \!+\! \sum_{m=1}^{M}\! C_m (K_m^2 u_2 - \Phi_m).
\end{align}
\begin{proposition}\label{beamforming}
By defining $\mathbf{H}_{bs,m} \triangleq \mathbf{h}_{bs,m} \mathbf{h}_{bs,m}^H$, and a weighted channel covariance matrix 
$\mathbf{G} \triangleq \left(\sum_{m=1}^{M} \left[ \frac{(a'+b')K_m^2}{\Phi_m^2(\mathbf{f}^*)} + 2B_m^ *\Phi_m(\mathbf{f}^*) + C_m^* \right] \mathbf{H}_{bs,m}\right)$, the optimal beamforming vector $\mathbf{f}^*$ is the principal eigenvector satisfying $\mathbf{G} \mathbf{f}^* = A^* \mathbf{f}^*$.
\begin{proof}
See Appendix \ref{proof_dual}.
\end{proof}
\end{proposition}

{Notice that only these constraints in C7 and C8 corresponding to the straggler with the worst channel condition are active (i.e., hold with equality).} According to the KKT complementary slackness theorem, only the dual variables $B_m$ and $C_m$ associated with the straggler $m$ are non-zero. These non-zero dual variables significantly increase the contribution of the stragglers' channel covariance matrices $\mathbf{H}_{bs,m}$ to the matrix $\mathbf{G}$. As a result, the optimal receiver beamforming is preferentially aligned with the straggler’s channel.

In addition to the insights observed from \textit{\textbf{Proposition \ref{beamforming}}}, we further propose an efficient SCA-based approach to find the optimal receiver beamforming scheme. Specifically, for iteration $i=1\dots I_{\text{max}}$, we approximate $\Phi_m$ by the first-order Taylor expansion based on the current value $\mathbf{f}^{(i)}$, i.e., 
\begin{align}
    {\Phi}_m^{(i)} &= \Phi_m(\mathbf{f}^{(i)}) + \text{Re}\left\{ (\mathbf{f}-\mathbf{f}^{(i)})^H \nabla \Phi_m(\mathbf{f}^{(i)}) \right\} \notag \\
    &= \Phi_m(\mathbf{f}^{(i)}) + 2\text{Re}\left\{ (\mathbf{f} - \mathbf{f}^{(i)})^H  \mathbf{H}_{bs,m} \mathbf{f}^{(i)}  \right\}\label{taylor1}.
\end{align}
Similarly, the terms $\frac{1}{\Phi_m}$ and $\Phi_m^2$ are approximated by
\begin{align}
    &\left(\frac{1}{{\Phi}_m}\right)^{(i)} \!= \!\frac{1}{\Phi_m(\mathbf{f}^{(i)})} \!-\! \frac{2\text{Re}\!\left\{\! (\mathbf{f} - \mathbf{f}^{(i)})^H  \mathbf{H}_{bs,m} \mathbf{f}^{(i)} \!\right\}}{\Phi_m^2(\mathbf{f}^{(i)})},\label{taylor2}\\
    &\left(\!\Phi_m^2\!\right)^{\!(i)} \!\!=\!\Phi_m^2(\mathbf{f}^{(i)}) \!+\! 4\Phi_m(\!\mathbf{f}^{(i)})\text{Re}\!\!\left\{\! (\mathbf{f} \!\!-\! \mathbf{f}^{(i)})^{\!H}  \!\mathbf{H}_{bs,m} \mathbf{f}^{(i)}\! \!\right\}\label{taylor3}.
\end{align}
Leveraging (\ref{taylor1})-(\ref{taylor3}), we construct the following second-order cone problems (SOCPs)
\begin{subequations}
\begin{align}
\mathbf{P4}:\mathbf{f}^{(i+1)}\!=&\arg\!\min_{}\!\left(a'+b'\right)\!\sum K_m^2 \!\left(\frac{1}{{\Phi}_m}\right)^{(i)}\!\!+\!\frac{c'}{u_1}\!+\!\frac{d'}{u_2}, \label{P6} \notag\\
    {s.t.}\quad &\text{C6}:||\mathbf{f}||_2^2 \leq 1, \notag\\
    &\text{C7}:K_m^2u_1\leq \left({{\Phi}_m^2}\right)^{(i)}, \ \ \forall m,  \notag\\
    &\text{C8}:K_m^2u_2\leq {\Phi}_m^{(i)}, \ \ \forall m, \notag
\end{align}
\end{subequations}
which can be efficiently solved via off-the-shelf solvers, such as CVX \cite{grant2008cvx}. We summarize the overall algorithm to solve Problem ($\mathbf{P1}$) in \textit{\textbf{Algorithm \ref{alg:joint_opt}}}.
\begin{algorithm}[t]
\caption{The Proposed Algorithm for Problem (\textbf{P1})}
\label{alg:joint_opt}
\begin{algorithmic}[1]
\State \textbf{Input:} $\{\mathbf{h}_{bs,m},K_m\}, D, \sigma, M,P_{\text{max}},It_{\text{max}}, I_{\text{max}},\varepsilon$.

\State \textbf{Initialize:} $\mathbf{f}$.

\For{$iter=1\dots It_{\text{max}}$}
    \State Update $P_\mu^{*}$ via \textit{\textbf{Proposition \ref{prop:P_mu_convex}}};
    \State Update $P_G^{*}$ and $P_\nu^{*}$ via \textit{\textbf{Proposition \ref{prop:P_g_p_nu}}};
    \For{$i=1\dots I_{\text{max}}$}
        \State Update $\mathbf{f}^{(i)}$ by solving ($\mathbf{P4}$);
    \EndFor
    \State Compute $\text{obj}^{(iter+1)}$ by (\ref{obj});
    \If{$|\text{obj}^{(iter+1)}-\text{obj}^{(iter)} |\leq \varepsilon$} \textbf{early stop};
    \EndIf
    \EndFor

\State \textbf{Return} $P_\mu^{*}, P_G^{*}, P_\nu^{*}, \mathbf{f}$.
\end{algorithmic}
\end{algorithm}

\section{Experiment Results}\label{sec_exp}
In this section, numerical evaluations are presented to validate the performance of the proposed sensing-native over-the-air FL framework. 
%The experiments are implemented in Python 3.8.10 with PyTorch 2.4.1, executing on a Linux server equipped with NVIDIA GeForce RTX 4090 GPUs (24GB VRAM) and CUDA 12.1 acceleration. 
%The complete source code repository is publicly available at \url{https://github.com/sukiiiKotori/over-the-air FL}.

\subsection{Experiment Setups}

We consider a single-cell mmWave wireless network within a $100 \times 100 \text{ m}^2$ deployment area. The BS is equipped with $N=32$ antennas positioned at Cartesian coordinates $(0,0,10)$. Unless otherwise stated, we consider $M=20$ single-antenna WDs randomly deployed in this area, with their heights following a uniform distribution $\mathcal{U}(0,2)$. The carrier frequency $f_c$ is 26 GHz. The channel noise power $\sigma^2$ is set to $10^{-10}\ \text{W}$. The bandwidth is set to $160 \ \text{MHz}$ and thus the symbol duration $T_{slot}$ is $\frac{1}{160\ \text{M}} = 6.25\ \text{ns}$.

The channel between the BS and WD $m$ follows the geometric model with $\rho=8$ paths \cite{gao2019spatial}, where $\mathbf{h}_{bs,m} = \text{PL}_{bs,m} \cdot \sqrt{\frac{N}{\rho_{\text{max}}}} \sum_{\rho=1}^{\rho_{\text{max}}} \alpha_{\rho} \boldsymbol{a}_{r}(\theta_{\rho})$. Here, $\text{PL}_{bs,m} = \sqrt{ c^{2} / (4\pi f_c R_{bs,m})^{2} }$ denotes the path loss, $\alpha_{\rho} \sim \mathcal{CN}(0,1)$ represents the complex gain of the $\rho$-th path, and $\theta_{\rho} \sim \mathcal{U}(0,2\pi)$ is the uniformly-distributed angle of arrival (AoA). The receive array response $\boldsymbol{a}_{r}(\theta)$ for half-wavelength spaced uniform linear array (ULA) takes the form $\boldsymbol{a}_r(\theta) = \frac{1}{\sqrt{N}} [ 1,\ e^{j\pi \sin\theta},\ \cdots,\ e^{j(N-1)\pi \sin\theta} ]^T$. 

The radar target channel gain follows $\alpha_{m,n} = \sqrt{ \sigma_{\mathrm{rcs}} G_{\mathrm{tx}} G_{\mathrm{rx}} c^{2} / [ (4\pi)^{3} f_c^2 (R_{m}R_{n})^{2} ] }$, where $\sigma_{\mathrm{rcs}}$ denotes the Radar Cross Section (RCS) \cite{ozkaptan2021optimal} and $G_{\mathrm{tx}} = G_{\mathrm{rx}}$ stand for the transmit and receiver antenna gains. For interference channel between WDs $m$ and $n$, we model $h_{m,n} = A_{\mathrm{side}}  \text{PL}_{m,n}$ with $A_{\mathrm{side}} = G_{\mathrm{tx}} - 15$ dB accounting for side-lobe attenuation, as WDs' antennas are not oriented toward the ground plane in which devices are located.

For the learning task, we train a CNN with $D = 83\text{,}594$ parameters on the CIFAR-10 dataset \cite{krizhevsky2009learning}. The network architecture consists of three convolutional blocks followed by three fully connected layers. Specifically, the convolutional layers utilize 16, 32, and 64 filters with kernel sizes of $5\times5$, $5\times5$, and $3\times3$, respectively. Each convolutional layer is succeeded by batch normalization, ReLU activation, and a $2\times2$ max-pooling layer. The resulting feature maps are flattened and processed by dense layers with 256 and 128 units, leading to a final 10-class output layer. The local training data are i.i.d. drawn from $K =50\text{,}000$ images with $K_m = K/M$. The learning rates for batch gradient descent (GD) and mini-batch stochastic gradient descent (SGD) are set to 0.03 and 0.1, respectively. The average performance of 30 independent evaluations is reported.

\subsection{Learning Performance Comparison}\label{exp-FL}
We compare the learning performance of the proposed sensing-native over-the-air FL framework with the following benchmarks:
\begin{figure*}[t]\centering
    \centering
    
    \begin{minipage}{0.31\textwidth}
        \includegraphics[width=\linewidth]{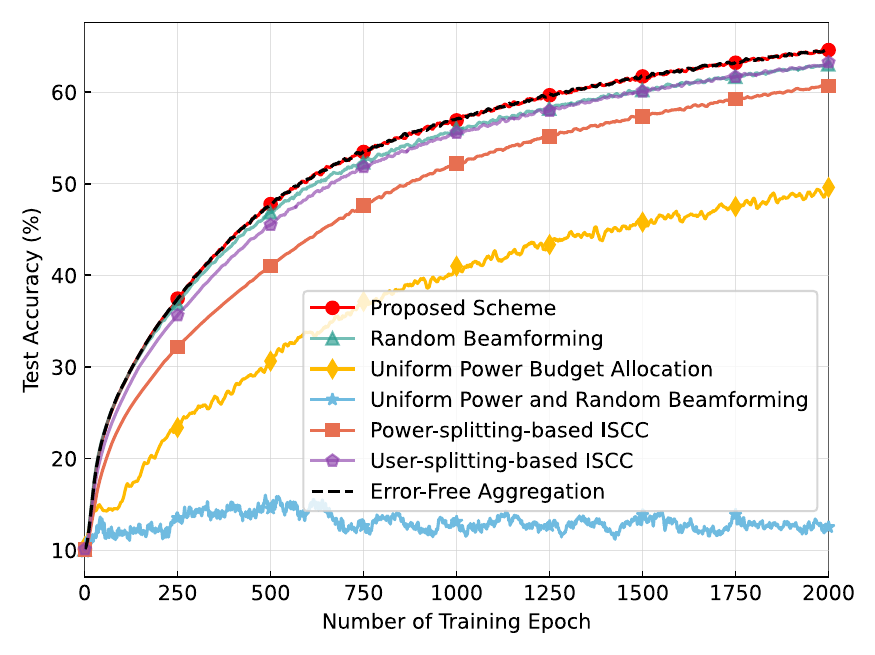}
        \caption{Comparison of test accuracy with 10 dBm per-slot average transmit power.}
        \label{fb_10db}
    \end{minipage}
    \hspace{0.02\textwidth}
    \begin{minipage}{0.31\textwidth}
        \includegraphics[width=\linewidth]{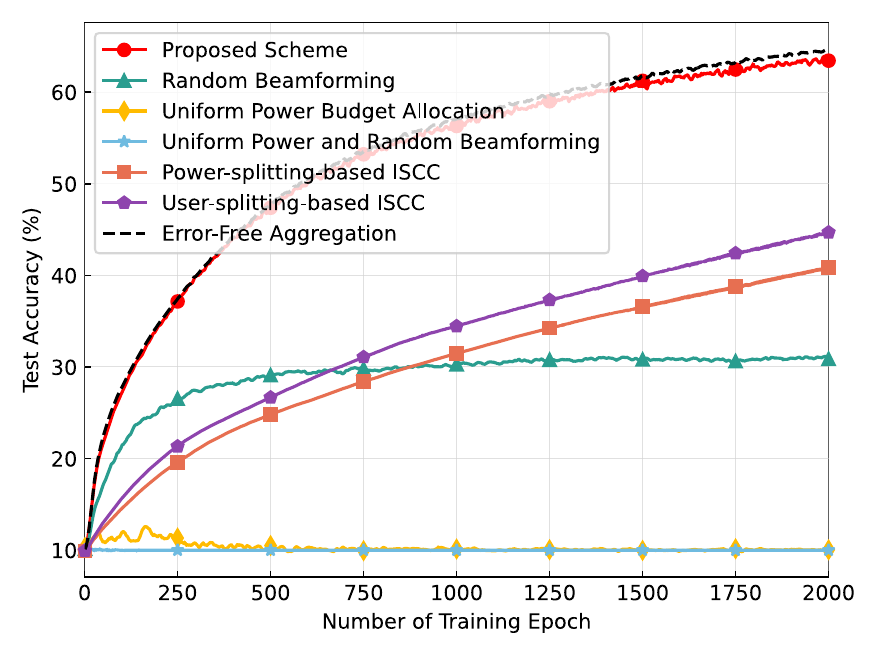}
        \caption{Comparison of test accuracy with -10 dBm per-slot average transmit power.}
        \label{fb_-10db}
    \end{minipage}
    \hspace{0.02\textwidth}
    \begin{minipage}{0.31\textwidth}
        \includegraphics[width=\linewidth]{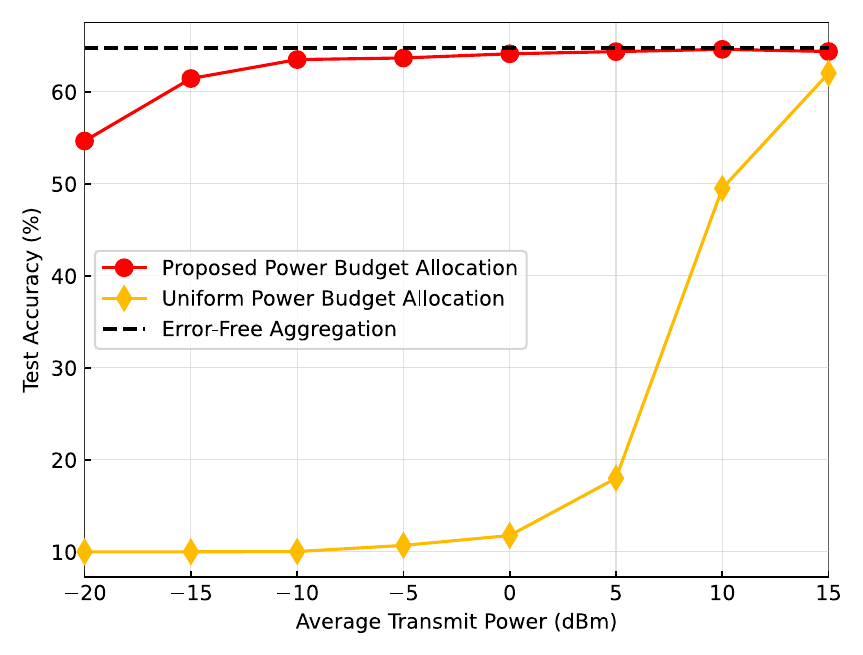}
        \caption{Test accuracy versus per-slot average transmit power under different power budget allocation schemes.}
        \label{fb_txp}
    \end{minipage}
\end{figure*}
\begin{itemize}
    \item \textbf{Error-free aggregation} \cite{mcmahan2017communication}: Per-round model aggregation is performed without channel fading and communication noise.
    
    \item \textbf{Random beamforming}: The transmit power budget is optimally allocated via \textit{\textbf{Proposition \ref{prop:P_mu_convex}}} and \textit{\textbf{ \ref{prop:P_g_p_nu}}} while setting the beamforming vector randomly.
    
    \item \textbf{Uniform power budget allocation} \cite{zhu2019broadband,r1}: The beamforming vector is optimized via the proposed SCA-based algorithm, while equally allocating the power budget for local gradient statistics transmission, i.e., $P_\mu = P_\nu = P_G/D$.
    
    \item \textbf{Uniform power budget allocation with random beamforming}: We follow equal power budget allocation \textit{and} randomly select receiver beamforming vector.

    \item \textcolor{black}{\textbf{Power-splitting-based ISCC} \cite{liu2024multi}: Each WD equally allocates its transmit power to the local gradient signal and a dedicated sensing waveform, where we assume noise-free transmission of the learning statistics and over-the-air transceiver design variables are optimized by minimizing  per-round aggregation MSE.}

    \item \textcolor{black}{\textbf{User-splitting-based ISCC} \cite{10615482}: The WDs are divided into two equal-sized groups. The WDs in one group transmit dedicated sensing waveforms with full transmit power. The other WDs transmit local gradient signals that serve a dual FL-and-sensing function, where over-the-air transceiver design variables are optimized by minimizing  per-round aggregation MSE under the assumption of noise-free transmission of the learning statistics. }
\end{itemize}

\textcolor{black}{In Fig. \ref{fb_10db}, we plot the test accuracies of all competing methods as the training proceeds, where the average transmit power per slot is 10 dBm. We observe that the proposed method approaches the performance with error-free aggregation and outperforms the other baselines, namely, random beamforming, uniform power budget allocation, uniform power budget allocation with random beamforming by around 1.6\%, 15.0\% and 52.1\%  after 2000 epochs. This demonstrates the effectiveness of jointly optimizing receiver beamforming and
the transmit power budget for both local gradients and the
corresponding statistics (i.e., gradient mean and variance) in
order to improve over-the-air FL performance while achieving
zero-overhead wireless sensing.
In addition, the proposed method also achieves higher test accuracy than the power-splitting-based ISCC and user-splitting-based ISCC methods by 3.9\% and 1.4\%, respectively. 
This is because both ISCC baselines require dedicating resources to the sensing module (i.e., transmit power in  power-splitting-based method and a subset of WDs in  user-splitting-based method), which degrades learning performance due to resource competition and cross-module interference. This result highlights the benefit of the proposed sensing-native over-the-air FL design, which unleashes the full ISCC resources for both sensing and FL without mutual compromise.}
%Power-splitting-based ISCC reduces the transmit power allocated to the FL signal and introduces sensing waveform interference, whereas User-splitting-based ISCC reduces the amount of data participating in FL and still introduces the interference from sensing-only WDs. 
%Moreover, User-splitting-based ISCC performs better than Power-splitting-based ISCC, probably because each WD holds a relatively large local dataset (2500 samples), making communication distortion more dominant than the data reduction.
%These observations demonstrate that the proposed sensing-native design avoids the power and user resource competition as well as the sensing-waveform interference suffered by Power-splitting-based ISCC and User-splitting-based ISCC, while the 

Fig. \ref{fb_-10db} compares the performance in terms of the test accuracy when average transmit power per slot is -10 dBm. Under such low-SNR condition, our proposed approach still achieves learning performance close to the error-free aggregation with only a 1.2\% reduction in test accuracy.
In addition, uniform power budget allocation scheme makes the model fail to converge. Moreover, the random beamforming scheme leads to around 32.5\% accuracy degradation, where the corresponding performance gap is larger compared with the relatively high-SNR case shown in Fig. \ref{fb_10db}. This highlights the importance of joint beamforming and power budget optimization for both local gradient and the corresponding statistics transmissions under the proposed sensing-native over-the-air FL system, especially in resource-constrained environments.
\textcolor{black}{
%Fig. \ref{fb_-10db} compares the performance in terms of the test accuracy when average transmit power per slot is -10 dBm. Under such low-SNR condition, our proposed approach still achieves learning performance close to the error-free aggregation with only a 1.2\% reduction in test accuracy. In addition, the uniform power budget allocation scheme makes the model fail to converge. Moreover, the random beamforming scheme leads to around 32.5\% accuracy degradation, where the corresponding performance gap is larger compared with the relatively high-SNR case shown in Fig. \ref{fb_10db}. 
The performance gaps between the proposed method and power-splitting-based ISCC and user-splitting-based ISCC are also  enlarged. This shows the advantages of the proposed sensing-native design by avoiding resource competition under low-SNR regime.}

Fig. \ref{fb_txp} depicts the final test accuracy versus the average per-slot transmit power after 2000 training rounds. We observe that a larger per-slot total transmit power budget leads to higher test accuracy for both optimal and uniform budget allocation schemes. In particular, the proposed method attains near-to-error-free performance around 0 dBm per-slot total power budget. 
In addition, the performance gap between the proposed transmit power budget allocation and the uniform budget assignment strategy becomes more pronounced in the low-SNR regime, showcasing the necessity of taking the transmit power allocation for local gradient statistics into account in the proposed sensing-native over-the-air FL. 

\begin{figure*}[t]\centering
    \centering
    \begin{minipage}{0.31\textwidth}
        \includegraphics[width=\linewidth]{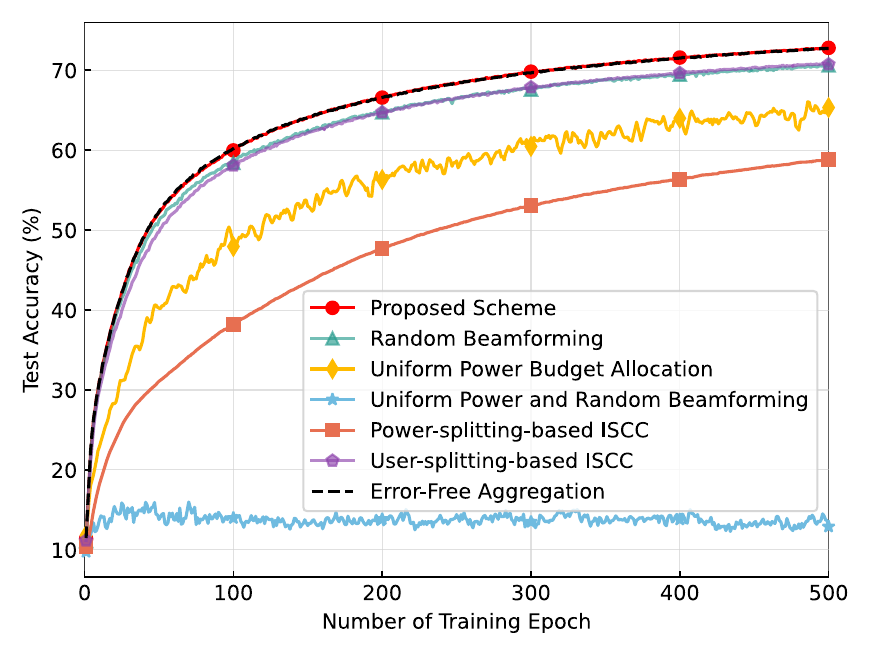}
        \caption{Comparison of test accuracy with 10 dBm per-slot average transmit power under mini-batch SGD.}
        \label{mb_10db}
    \end{minipage}
    \hspace{0.02\textwidth}
    \begin{minipage}{0.31\textwidth}
        \includegraphics[width=\linewidth]{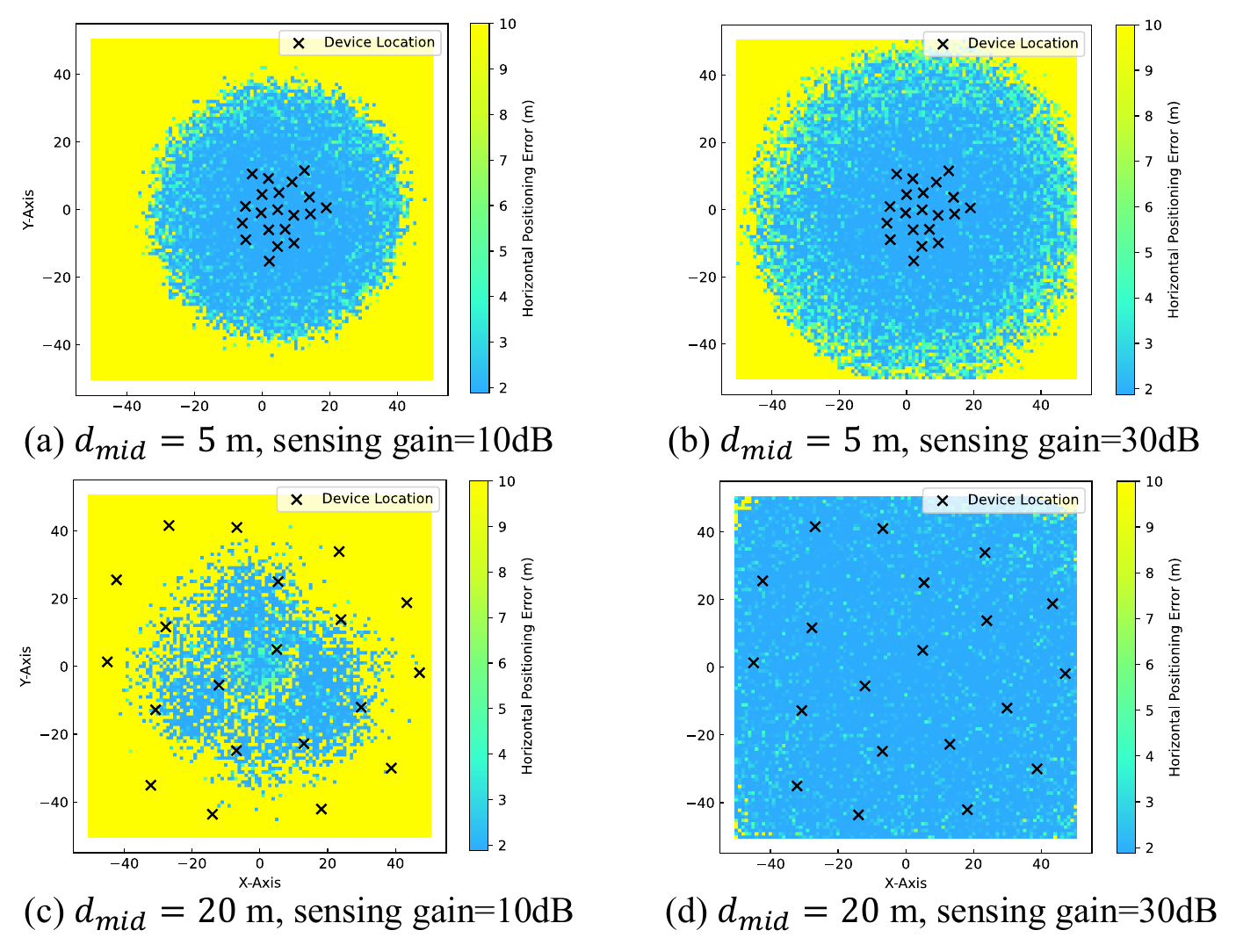}
        \caption{Positioning error distributions with different minimum inter-WD distance $d_{mid}$ and sensing gains.}
        \label{fig:error_distributions}
    \end{minipage}
    \hspace{0.02\textwidth}
    \begin{minipage}{0.31\textwidth}
        \includegraphics[width=\linewidth]{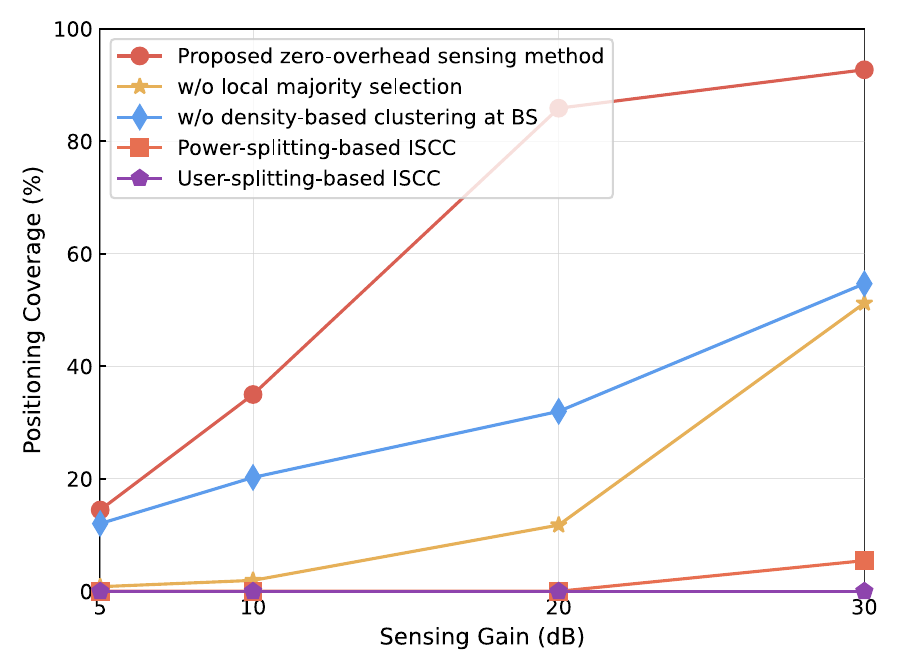}
        \caption{Positioning coverage comparison under different sensing gains.}
        \label{positioning_scheme}
    \end{minipage}
\end{figure*}

In Fig. \ref{mb_10db}, we further evaluate the learning performance of the proposed sensing-native over-the-air FL under mini-batch SGD, where each WD performs five local SGD iterations before each model aggregation step with mini-batch size $B_m = K_m/5$, and per-slot total power budget is set to 10 dBm. It is observed that the proposed scheme outperforms all baselines while closely approaching the performance of error-free aggregation method, confirming its effectiveness for mini-batch SGD framework. Notably, we observe a reduced performance gap between the proposed scheme and uniform power allocation compared to the full-batch GD framework. This is because multiple local iterations effectively decrease the frequency of over-the-air model aggregation perturbed by wireless channel fading and communication noise.

\subsection{Performance Evaluation of Zero-Overhead Sensing }\label{sensing_exp}
To evaluate the proposed zero-overhead distributed wireless sensing performance along with the over-the-air FL system, we discretize the considered $100\times 100 \text{ m}^2$ deployment area into $101 \times 101$ grids comprising 10,201 points. Suppose that sensing target is located at one of these grid points with 20 m height. We set per-slot total power budget as 10 dBm and adopt the mini-batch SGD framework for over-the-air FL. The total sensing gain is defined as $\sigma_{\mathrm{rcs}}G_{\mathrm{tx}}G_{\mathrm{rx}}$. A higher sensing gain represents a better radar target channel condition.

%\begin{figure}[t]\centering
%\includegraphics[width=0.435\textwidth]{positioning_error_cdf.pdf}
%\caption{Positioning error distribution of x, y, and z axes.}
%\label{error_distribution}
%\end{figure}

We consider the horizontal positioning error as the sensing performance metric, defined as $e(\mathbf{\Xi}) = \sqrt{(\mathbf{\Xi}[1] - \mathbf{\Xi}_{\text{true}}[1])^2 + (\mathbf{\Xi}[2] - \mathbf{\Xi}_{\text{true}}[2])^2}$. Then, the corresponding positioning coverage is defined as the proportion of grid points whose positioning error $e \leq 3 \text{m}$. 

Fig. \ref{fig:error_distributions} reports the positioning error at each grid point under different sensing gains and distribution densities among 20 WDs. {Note that the device distribution density is characterized by the minimum inter-WD distance (defined by $d_{mid}$). Specifically, WDs are deployed sequentially. The first WD is randomly placed on a circle of radius $d_{mid}$ centered at the BS, and each subsequent WD is positioned such that its distance to the nearest previously deployed WD equals $d_{mid}$. As a result, a smaller $d_{mid}$ indicates a more tightly clustered deployment.} We observe that the positioning coverage increases with sensing gain under each $d_{mid}$. This is because higher sensing gain enhances the echo SNR per WD, extending the effective sensing coverage. 
By comparing Figs. 7(a) and 7(b), we observe that with a closer geographic proximity among WDs (i.e., $d_{mid}$ = 5 m), the sensing coverage enhancement by enlarging sensing gain from 10 dB to 30 dB is limited (i.e., an increased sensing coverage from 37.2\% to 54.1\%). This is because although increasing sensing gain improves the radar target channel between any two WDs, the tight WD clustering causes severe inter-device interference, limiting the sensing coverage improvement. In contrast, we observe from Figs. 7(c) and 7(d) that when the WDs are deployed with a larger $d_{mid}$ (e.g., 20 m), the proposed sensing-native over-the-air FL system benefits significantly from increased sensing gain (i.e., an increased sensing coverage from 20.1\% to 94.6\%). This is due to the fact that a sparser deployment of WDs not only reduces inter-device sensing interference, but also enhances horizontal geometric diversity. A wide angular separation relative to the target reduces the Horizontal Dilution of Precision (HDOP) and improves the robustness against inter-device interference and communication noise.

\begin{table}[t]
    \centering
    \caption{Positioning Coverage (\%) vs. Number of Devices across Different Sensing Gains}
    \label{positioning_usrnum}
    \begin{tabular}{lccc}
        \toprule
        \multirow{2}{*}{\textbf{Total Sensing Gain (dB)}} & \multicolumn{3}{c}{\textbf{Number of Devices}} \\
        \cmidrule(lr){2-4}
        & \textbf{10} & \textbf{20} & \textbf{30} \\
        \midrule
        5  & 16.3 & 14.4 & 7.9  \\
        20 & 81.5 & 85.9 & 87.0 \\
        \bottomrule
    \end{tabular}
\end{table}

Table \ref{positioning_usrnum} compares the positioning coverage versus the number of WDs in the proposed sensing-native over-the-air FL system, where the minimum inter-WD distance $d_{mid}$ is set to 15 m. For the low sensing gain case (i.e., 5 dB), the system with 10 WDs outperforms those with 20 or 30 WDs. 
This is because with a fixed $d_{mid}$, a smaller number of WDs results in a deployment closer to the BS. This proximity improves the channel conditions of the straggler, leading to a higher power for local gradient transmission of the other WDs, thereby enhancing the received echo SNR for local target distance estimation. 
In contrast, under higher sensing gains, positioning coverage increases with the number of WDs. This is because an increased number of WDs provides more valid local distance estimates, which improves the accuracy of the clustering-based positioning algorithm at the BS. In addition, the larger number of devices enhances geometric diversity, thereby reducing the HDOP.

Furthermore, we evaluate the positioning coverage under different sensing gains in Fig. \ref{positioning_scheme}, where the number of WDs is set to 20 and the minimum inter-WD distance $d_{mid}$ is 15 m. We compare the proposed zero-overhead sensing method with four benchmark schemes. In addition to power-splitting-based and user-splitting-based ISCC methods, we further consider the proposed method without the local majority selection process, where each WD directly uploads the estimated distance $\ddot{R}_{m,t}$ at the current FL round by skipping step (\ref{majority_select}). Besides, we consider the proposed method without density-based clustering, where the BS directly applies least-squares estimation in step (\ref{q_set}) using the reported distances and known coordinates of all WDs. 
It is observed that the proposed method outperforms all benchmark schemes, especially under the high sensing-gain regimes, achieving 38.0\% and 41.5\% higher positioning coverages compared to the schemes without local majority selection and without density-based clustering, respectively, when the sensing gain is 30 dB. This highlights the importance of selecting robust local distance estimates and detecting outliers for centralized positioning decisions in improving sensing performance under various system conditions. 
\textcolor{black}{Meanwhile, power-splitting-based ISCC and user-splitting-based ISCC only achieve 5.5\% and 0.1\% positioning coverages, respectively,  when the sensing gain is 30 dB. This is because the resource allocated to the sensing module in traditional ISCC approaches is limited and cross-module interference severely degrades the sensing accuracy. In contrast,  the proposed sensing-native design substantially improves the positioning coverage through  FL gradient signal reuse without resource competition and interference-aware sensing design.}

%Power-splitting-based ISCC relies on dedicated sensing waveforms that compete with FL signals for transmit power, while it also lacks the ability to exclude unreliable WDs from the localization process. The User-splitting-based ISCC performs even worse because the dual-function WDs still suffer from inter-device interference without interference-aware sensing design. As a result, increasing the sensing gain alone cannot effectively eliminate the unreliable sensing observations. These results demonstrate that the proposed sensing-native design substantially improves sensing performance by reusing native FL gradient signals, avoiding resource competition, and filtering unreliable sensing observations.

\section{Conclusions}\label{sec_con}
This paper proposed a novel sensing-native over-the-air FL framework by seamlessly integrating wireless sensing functionality into per-round over-the-air model aggregation without compromising FL performance. Leveraging the exceptional auto-correlation properties of over-the-air FL waveform and ready-made learning statistics signals to deliver local sensing results, we embed sensing capabilities without requiring additional time or frequency resources. We developed a zero-overhead cooperative localization method by incorporating matched-filtering-based local distance estimation and robust trilateration with density-based clustering at the BS. By explicitly capturing the convergence performance under imperfect model aggregation and noisy gradient-statistics transmission, we proposed a statistics-aware communication-learning co-design method to improve the learning performance of the sensing-native over-the-air FL. 
%We obtained the optimal transmit power budgets allocated to local gradients and their statistics in closed forms, based on which an efficient SCA-based method was proposed to optimize the receiver beamforming. 
Simulation results demonstrated that the proposed framework achieves superior learning performance while simultaneously improving sensing coverage and localization accuracy. 
%This work revealed the potential of over-the-air FL as a promising sensing-native paradigm for ISCC.

\appendices
\section{Proof of Lemma \ref{Le_error_bound}}\label{proof_L1}
Let $e_{t}[d]$ denote the $d$-th element of per-slot aggregation error $\mathbf{e}_t$. Then, we have 
\begin{align}
    \mathbb{E}&\left[|e_{t}[d]|^2\right] \notag \\
    &= \frac{1}{\left(\sum_{m=1}^{M}K_m\right)^2}\mathbb{E}\Bigg[\Bigg|\sum_{m=1}^{M}K_m (g_{m,t}[d] -\hat{\mu}_{m,t})  \notag \\
    &\quad - \frac{1}{\sqrt{\eta_t}}\!\left(\!\sum_{m=1}^{M} \!\mathbf{f}^H \mathbf{h}_{bs,m} x_{1,m,t}[d]\!+\!\mathbf{f}^H\boldsymbol{\omega}_{1,t}\!\right)\!\!\Bigg|^2 \!\Bigg] \notag \\
    &= \frac{1}{\left(\sum_{m=1}^{M}K_m\right)^2}\mathbb{E}\Bigg[\!\bigg|\!\sqrt{2}\!\sum_{m=1}^{M}\!\mu_{\text{max}}\text{Re}\!\left\{\!\frac{\mathbf{f}^H\boldsymbol{\omega}_{1,t}}{\mathbf{f}^H\mathbf{h}_{bs,m}\sqrt{p_{3,m,t}}}\!\right\}\notag \\
    &\quad\!+\!\!\sum_{m=1}^{M}\!\!\left(\!\!K_m\!\!-\!\!\frac{\mathbf{f}^H\mathbf{h}_{bs,m}p_{1,m,t}}{\sqrt{\eta_t}\nu_{m,t}}\!\right)\!(g_m[d]\!-\!\mu_{m,t})\!+\!\frac{\mathbf{f}^H\!\boldsymbol{\omega}_{1,t}}{\sqrt{\eta_t}}\!\bigg|^2 \!\Bigg] \notag \\
    &\overset{(a)}{=} \frac{1}{\left(\sum_{m=1}^{M}\!\!K_m\!\right)^2\!}\Bigg( \frac{\sigma^2}{\mathbb{E}[\eta_t]}+\sum_{m=1}^M \frac{\mu_{\text{max}}^2K_m^2\sigma^2}{p_{3,m,t}|\mathbf{f}^H\mathbf{h}_{bs,m}|^2} \notag \\
    &\quad\!+\!\mathbb{E}\!\Bigg[\!\bigg|\!\sum_{m=1}^M \!\!\left(\!\!K_m\!\!-\!\!\frac{\mathbf{f}^{\!H}\mathbf{h}_{bs,m}p_{1,m,t}}{\sqrt{\eta_t}\nu_{m,t}}\!\!\right)\!\!(g_m[d]\!-\!\mu_{m,t})\bigg|^2 \Bigg]\!\Bigg),
\end{align}
where the equality (a) is due to $||\mathbf{f}||_2^2=1$ and the independence of communication noise. Accordingly, we have 
\begin{small}
\begin{align}\label{err2_a}
    \mathbb{E}&[||\mathbf{e}_{t}||_2^2] = \sum_{d=1}^{D}\mathbb{E}\left[\left| e_{t}[d]\right|^2\right] \notag \\
    &= \frac{1}{\left(\sum_{m=1}^{M}K_m\right)^2}\Bigg( \frac{D\sigma^2}{\mathbb{E}[\eta_t]}+ D\sigma^2\sum_{m=1}^M \frac{\mu_{\text{max}}^2K_m^2}{p_{3,m,t}|\mathbf{f}^H\mathbf{h}_{bs,m}|^2} \notag \\
    &\quad\!\!\!+\!\!\underbrace{\mathbb{E}\!\Bigg[\!\bigg|\!\!\sum_{m=1}^M \!\!\!\left(\!\!K_m\!\!-\!\!\frac{\mathbf{f}^H\mathbf{h}_{bs,m}p_{1,m,t}}{\sqrt{\eta_t}\nu_{m,t}}\!\!\right)\!\!\sum_{d=1}^D(g_m[d]\!\!-\!\!\mu_{m,t})\bigg|^2 \!\Bigg]}_{T_1}\!\!\Bigg)\!,
\end{align}
\end{small}
Ideally, one can minimize $\mathbb{E}[||\mathbf{e}_{t}||_2^2]$ by setting $p_{1,m,t} = \frac{K_m\sqrt{\eta_t}{\nu}_{m,t}}{\mathbf{f}^H\mathbf{h}_{bs,m}}$ for $T_1=0$ in (\ref{err2_a}). Nevertheless, in the proposed sensing-native over-the-air FL system, with only estimated statistics $\hat{\nu}_{m,t}$ available at the BS, the transmit power for each WD's local gradient is given by (\ref{pg}).
Given the transmit power constraint $\mathbb{E}[|x_{1,m,t}|^2] = |p_{1,m,t}|^2 \leq P_g$ for the local gradient, the normalization factor at the BS should be set to (\ref{eta}) in order to minimize $\mathbb{E}[||\mathbf{e}_{t}||_2^2]$ in (\ref{err2_a}).

By substituting (\ref{r2_trans}), (\ref{r3_trans}), (\ref{pg}), and (\ref{eta}) into (\ref{err2_a}), we have 
\begin{small}
\begin{align}\label{err2_c}
    \mathbb{E}&[||\mathbf{e}_{t}||_2^2] \notag \\
    &{\leq} \frac{1}{\left(\sum_{m=1}^{M}K_m\right)^2}\Bigg( \frac{D\sigma^2}{\mathbb{E}[\eta_t]}+ D\sigma^2\sum_{m=1}^M \frac{\mu_{\text{max}}^2K_m^2}{p_{3,m,t}|\mathbf{f}^H\mathbf{h}_{bs,m}|^2} \notag \\
    &\quad+\!\sigma^2\!\sum_{m=1}^M \!\frac{\nu_{\text{max}}^2K_m^2}{p_{2,m,t}|\mathbf{f}^H\mathbf{h}_{bs,m}|^2}\times\frac{\sum_{d=1}^{D}(g_m[d]\!-\!\mu_{m,t})^2}{\nu_{m,t}^2}\!\Bigg) \notag \\
    &= \frac{D\sigma^2} { K^2} \!\left[\left(\! \frac{\mu_{\text{max}}^2}{P_\mu}\!+\!\frac{\nu_{\text{max}}^2}{P_\nu}\! \right)\!\sum_{m=1}^M \! \frac{K_m^2}{|\mathbf{f}^H\mathbf{h}_{bs,m}|^2} 
    \!+\!\frac{1}{\mathbb{E}[\eta_t]}\right] \notag \\
    &{\leq}\frac{D\sigma^2} { K^2}\left( \frac{\mu_{\text{max}}^2}{P_\mu}+\frac{\nu_{\text{max}}^2}{P_\nu} \right) \sum_{m=1}^M  \frac{K_m^2}{|\mathbf{f}^H\mathbf{h}_{bs,m}|^2}  \notag \\
    &\quad+\frac{D\sigma^2}{P_g K^2}\max_{m\in M}\left(\frac{K_m^2}{|\mathbf{f}^H\mathbf{h}_{bs,m}|^2}\left(\nu_{m,t}^2\!+\!\frac{\nu_{\text{max}}\sigma^2}{P_\nu|\mathbf{f}^H\mathbf{h}_{bs,m}|^2}\right)\right) \notag \\
    &{\leq}\frac{D\sigma^2} { K^2} \left( \frac{\mu_{\text{max}}^2}{P_\mu}+\frac{\nu_{\text{max}}^2}{P_\nu} \right)\!\sum_{m=1}^M  \frac{K_m^2}{|\mathbf{f}^H\mathbf{h}_{bs,m}|^2}  \notag \\
    &\quad+\frac{D^2\sigma^2\nu_{\text{max}}^2} {P_G K^2} \max_{m\in M}\frac{K_m^2}{|\mathbf{f}^H\mathbf{h}_{bs,m}|^2} \notag \\
    &\quad+\frac{D^2\sigma^4\nu_{\text{max}}^2} {P_GP_\nu K^2} \max_{m\in M} \frac{K_m^2}{|\mathbf{f}^H\mathbf{h}_{bs,m}|^4}.
\end{align}
\end{small}
This completes the proof. 

\section{Proof of Theorem \ref{Th_gap}}\label{proof_T1}
With the assumption of Lipschitz smoothness and $\mathbf{w}_{t+1} = \mathbf{w}_t - \frac{1}{L} (\nabla F(\mathbf{w}_t)-\mathbf{e}_t)$, we have
\begin{small}
\begin{align}\label{proof_1}
    \mathbb{E}&[F(\mathbf{w}_{t+1})] - \mathbb{E}[F(\mathbf{w}_t)] \notag \\
    &\leq -\frac1L(\nabla F(\mathbf{w}_t)-\mathbf{e}_t)^T\nabla F(\mathbf{w}_t)+\frac1{2L}||\nabla F(\mathbf{w}_t)-\mathbf{e}_t||_2^2 \notag \\
    &\leq -~\frac1L~||\nabla F(\mathbf{w}_t)||_2^2+\frac{1}{L}\mathbf{e}_t^T\nabla F(\mathbf{w}_t) \notag \\
    &\quad+\frac{1}{2L}||\nabla F(\mathbf{w}_t)||_2^2-\frac{1}{L}\mathbf{e}_t^T\nabla F(\mathbf{w}_t)+\frac{1}{2L}||\mathbf{e}_t||_2^2 \notag \\
    &\leq -\frac1{2L}||\nabla F(\mathbf{w}_t)||_2^2+\!\frac 1 {2L} \left( \mathbb{E}\left[||\mathbf{e}_{t}||_2^2\right] \right),
\end{align}
\end{small}
where the expectation operator is taken with respect to the randomness of the communication noise.

With \textit{Assumption \ref{PL-inequality}}, i.e., $||\nabla F(\mathbf{w}_t)||_2^2 \geq 2S(F(\mathbf{w}_{t})-F(\mathbf{w}^*))$, we have 
\begin{align}\label{single_step_2}
    \mathbb{E}&[F(\mathbf{w}_{t+1})] - \mathbb{E}[F(\mathbf{w}_t)] \notag \\
    &\leq \frac 1 {2L} \left( \mathbb{E}\left[||\mathbf{e}_{t}||_2^2\right] \right)-\!\frac{S}{L}\left(F(\mathbf{w}_{t})-F(\mathbf{w}^*)\right).
\end{align}
By subtracting $F(\mathbf{w}^*)$ from both sides and applying (\ref{single_step_2}) recursively with $t$ iterations, we have
\begin{align}\label{single_step_end}
    \mathbb{E}&\left[F(\mathbf{w}_{t+1})-F(\mathbf{w}^*)\right] \notag \\
    &\leq\!\left(1 - \frac{S}{L}\right)\mathbb{E}\left[F(\mathbf{w}_t)-F(\mathbf{w}^*)\right] \!+\! \frac 1 {2L} \left( \mathbb{E}\left[||\mathbf{e}_{t}||_2^2\right] \right)\notag \\
    &\leq\!\Psi^{t+1}\!\left(F(\mathbf{w}_0)-F(\mathbf{w}^*)\right) \!+\! \frac{1}{2L}\! \sum_{k=0}^{t} \Psi^{t-k} \mathbb{E}[||\mathbf{e}_{k}||_2^2],
\end{align}
where $\Psi = 1 - \frac{S}{L}$.

By applying \textit{\textbf{Lemma \ref{Le_error_bound}}} and $\sum_{t} \Psi^t = \frac{1-\Psi^{t+1}}{1-\Psi}$, we have (\ref{theorm1}).

\section{Proof of Propositions \ref{prop:P_mu_convex}-\ref{prop:P_g_p_nu}}
\subsection{Optimal Power Budget Allocation for Local Gradient and Its Variance}\label{app:proof_pnu_pg}
Notice that the objective function of Problem (\textbf{P1}) is monotonically decreasing with respect to $P_\nu$. Then, given $P_G$ and $P_\mu$, the optimal power budget allocated to the local gradient's variance is 
\begin{align}\label{eq:app_Pnu}
    P_{\nu}^* = P_{\text{max}} - P_G - P_\mu.
\end{align}
With optimal $ P^*_\nu$, fixed $P_\mu$ and $\mathbf{f}$, the optimization problem over transmit power budget $P_G$ is given by
\begin{align}\label{eq:app_P2}
    \min_{P_G}~ G(P_G) \triangleq \frac{b}{\Delta - P_G}+\frac{c}{P_G\left(\Delta - P_G\right)}+\frac{d}{P_G},
\end{align}
where $\Delta=P_{\text{max}}-P_\mu$ and $b, c, d$ are defined in (\ref{eq:derivative_expression}), respectively. By calculating the partial derivative and letting $\frac{\partial G}{\partial P_G}=0$, we obtain the unique optimal $P_G^*$ in closed form:
\begin{align}\label{eq:app_PG_star}
    P_{G}^* =\Delta \left(\frac{\sqrt{c+d\Delta}}{\sqrt{c+d\Delta}+\sqrt{c+b\Delta}}\right) = \epsilon \Delta,
\end{align}
where $\epsilon$ is defined in (\ref{eq:scale}). By substituting (\ref{eq:app_PG_star}) into (\ref{eq:app_Pnu}), we obtain optimal $P_\nu^*$ in (\ref{eq:opt_Pnu_closed}).

\subsection{Optimal Power Budget Allocation for Local Gradient's Mean}\label{app:proof_pmu}
With the optimal $P_G^*$ and $P^*_\nu$ in (\ref{eq:opt_PG_closed}) and (\ref{eq:opt_Pnu_closed}), the optimization over $P_\mu$ for Problem (\textbf{P1}) is given by 
\begin{align}\label{eq:app_P3}
    \min_{P_\mu}~J(P_\mu) = \frac{a}{P_\mu} + \frac{1}{Q^2(\Delta)},
\end{align}
where $Q(\Delta) = \frac{\sqrt{c+b\Delta}-\sqrt{c+d\Delta}}{b-d}$. To find the optimal $P_\mu$, we compute the derivative with respect to $P_\mu$, i.e.,
\begin{align}
    \frac{\partial J(P_\mu)}{\partial P_\mu} =\hat{J}(P_\mu) =  -\frac{a}{P_\mu^2} + \frac{2}{Q^3(\Delta)} \frac{\partial Q(\Delta)}{\partial \Delta}.
\end{align}
We then calculate the second derivative 
\begin{align}
    \frac{\partial^2 \!J(P_\mu)}{\partial P_\mu^2}\!=\!\frac{2a}{P_\mu^3} \!+\! \frac{6}{Q^4(\Delta)}\! \left(\! \frac{\partial Q(\Delta)}{\partial \Delta}\!\! \right)^2 \!\!\!\!-\! \frac{2}{Q^3(\Delta)} \frac{\partial^2 Q(\Delta)}{\partial \Delta^2},
\end{align}
where $\frac{\partial^2 \!Q(\Delta)}{\partial \Delta^2} \!\!=\!\! \frac{1}{4(b\!-\!d)} \!\!\left[\! \!\frac{1}{\left(\frac{c}{d^{4/3}} \!+\! \frac{\Delta}{d^{1/3}}\right)^{3/2}} \!-\! \frac{1}{\left(\frac{c}{b^{4/3}} \!+\! \frac{\Delta}{b^{1/3}}\right)^{3/2}}\! \!\right]\!\!.$
%\begin{align}   
%\end{align}
For both cases where $b > d$ and $b < d$, we have $\frac{\partial^2 Q(\Delta)}{\partial \Delta^2} < 0$. Accordingly, given that $\frac{2a}{P_\mu^3} > 0$ and $Q(\Delta) > 0$ for $\Delta > 0$, we have $\frac{\partial^2 J(P_\mu)}{\partial P_\mu^2} > 0$, which shows that $\frac{\partial J(P_\mu)}{\partial P_\mu}$ is monotonically increasing with respect to $P_\mu$.

When $P_\mu \to 0^+$, we have $-\frac{a}{P_\mu^2} \to -\infty$, leading to $\frac{\partial J(P_\mu)}{\partial P_\mu} \to -\infty$. When $P_\mu \to P_{\text{max}}^-$, we have $\Delta \to 0^+$, which results in $Q(\Delta) \to 0^+$ and $\frac{\partial J(P_\mu)}{\partial P_\mu} \to +\infty$. Therefore, there exists a unique root $P_\mu^* \in (0, P_{\text{max}})$ satisfying $\frac{\partial J(P_\mu)}{\partial P_\mu}=0$. 

\section{Proof of Proposition \ref{beamforming}}\label{proof_dual}
Given the Lagrangian in (\ref{lagrangian}), at the optimum, the stationarity condition requires that $\frac{\partial \mathcal{L}}{\partial \mathbf{f}}=\mathbf{0}$, i.e.,
\begin{align}
\sum_{m=1}^{M} (a'+b')K_m^2 \left(-\frac{2\mathbf{H}_{bs,m}\mathbf{f}^*}{\Phi_m^2(\mathbf{f}^*)}\right) &- \sum_{m=1}^{M} C_m^* (2\mathbf{H}_{bs,m}\mathbf{f}^*) \notag \\
- \sum_{m=1}^{M} B_m^* (4\Phi_m(\mathbf{f}^*) \mathbf{H}_{bs,m}\mathbf{f}^*) &+ 2A^*\mathbf{f}^* = \mathbf{0}.
\end{align}
Equivalently, we have 
\begin{align}
\left( \!\sum_{m=1}^{M} \!\!\left[ \!\frac{(a'\!\!+\!b')K_m^2}{\Phi_m^2(\mathbf{f}^*)} \!\!+ \!\!2B_m^* \Phi_m(\mathbf{f}^*) \!\!+\!\! C_m^* \!\right]\! \mathbf{H}_{bs,m}\!\! \right)\! \mathbf{f}^* \!\!=\!\! A^* \mathbf{f}^*\!.
\end{align}
\bibliographystyle{IEEEtran}
\bibliography{IEEEabrv,myref}

\end{document}